\documentclass[lettersize,journal]{IEEEtran}
\usepackage{amsmath,amsfonts}
\usepackage{amssymb}  
\usepackage{amsthm}
\usepackage{algorithmic}
\usepackage{algorithm}
\usepackage{array}
\usepackage[caption=false,font=normalsize,labelfont=sf,textfont=sf]{subfig}
\usepackage{textcomp}
\usepackage{stfloats}
\usepackage{url}
\usepackage{verbatim}
\usepackage{graphicx}

\usepackage{cite}
\newtheorem{theorem}{Theorem}

\begin{document}
    \title{Tripartite System Dynamics of Cooperation Ecological Resources and Adaptive Regulation}
\author{Linjie Liu, Yike Li, Shijia Hua
\thanks{L. Liu, Y. Li, S. Hua are with College of Science, Northwest A \& F University, Yangling, 712100, China.}
\thanks{Corresponding authors: Shijia Hua (email: sjhua@nwafu.edu.cn).}}

\maketitle

\begin{abstract}
To address commons dilemmas under costly social cooperation, this paper develops a tripartite co-evolutionary system that integrates social behavior, ecological-resource dynamics, and institutional regulation. Institutional intervention is modeled as an adaptive feedback mechanism whose intensity responds to deviations from prescribed cooperation and ecological targets. Through equilibrium analysis, stability analysis, and phase-space characterization, we identify the principal operating regimes and transition mechanisms of the coupled system. The results show that the system admits a sustainable operating equilibrium in which cooperation, ecological resources, and institutional regulation coexist. We further establish the structural stability of a desirable operating state characterized by full cooperation, peak resource availability, and persistently minimal intervention. In addition to these stable regimes, the system exhibits bistability and limit cycles, indicating that alternative long-term outcomes may arise from different initial conditions and parameter configurations. These nonlinear mechanisms generate a persistent crackdown–rebound policy trap, in which intensified intervention temporarily improves cooperation and resource conditions but eventually produces recurrent oscillations and prevents system-wide consensus. The proposed framework provides a system-level explanation of governance failure and offers analytical insights for designing adaptive institutional regulation in coupled human–environment systems.
\end{abstract}

\begin{IEEEkeywords}
	Multi-agent system, evolutionary game theory, environmental feedback, equilibrium analysis, oscillations.
\end{IEEEkeywords}

%
\IEEEpeerreviewmaketitle

\section{Introduction}

\IEEEPARstart{T}{he} governance of common-pool resources (CPRs) presents a fundamental systems challenge arising from the classic tragedy of the commons \cite{hardin1998extensions}. At the center of this challenge is the interaction between two competing behavioral modes, namely, cooperation and free riding \cite{guo2023third,hu2021adaptive,wang2023evolutionary,zhang2024limitation,xu2025reinforcement,leung2026learning,ren2023reputation}. Cooperation requires individuals to voluntarily limit resource extraction or contribute to environmental maintenance, thereby incurring private costs for collective benefits \cite{paiva2018engineering}. In contrast, free riders seek to maximize their individual gains while avoiding the costs of resource conservation and environmental upkeep, thereby benefiting from the contributions of cooperators \cite{santos2023dynamics}. The resulting asymmetry between privately appropriated benefits and collectively shared degradation costs creates persistent incentives for free riding and undermines cooperative behavior \cite{santos2019evolution}. In the absence of effective institutional regulation, the interaction between behavioral incentives and ecological degradation can generate a reinforcing deterioration of the coupled social--ecological system, ultimately leading to the loss of cooperation and the depletion of the underlying resource base \cite{wang2022modelling}.

Evolutionary game theory provides a foundational mathematical framework for analyzing strategic interactions among individuals \cite{feng2023evolutionary,liu2024heterogeneously,lim2026truster,leung2026learning,han2022emergent,jia2025social,jia2025asymmetric,shi2025unified}. In recent years, to capture the reciprocal dynamics between human behaviors and natural resources, this field has advanced into the co-evolutionary game framework \cite{wang2023emergence,liu2023coevolutionary,stella2022lower,kawano2018evolutionary,hilbe2018evolution,liang2025coevolution}. In a seminal study, Weitz et al. integrated game-environment feedback into replicator dynamics \cite{weitz2016oscillating}. They demonstrated that this coupling generates an oscillating tragedy of the commons, where resource abundance and cooperative behavior cycle perpetually. Building on this paradigm, Tilman et al. generalized the incorporation of environmental feedbacks into evolutionary games, concluding that changing ecological states continuously alter the relative payoffs of strategies, thereby fundamentally reshaping evolutionary trajectories \cite{tilman2020evolutionary}. Driven by these foundational discoveries, the co-evolutionary framework has attracted increasing attention from researchers across various disciplines \cite{chen2025evolutionary,wang2026coevolutionary}.

Despite the explanatory power of these foundational models, a critical dimension of modern governance remains systematically overlooked \cite{pan2024towards,sun2023state,chen2022dim,ostrom2008challenge,chen2018punishment,liang2025coevolution}. Existing coevolutionary frameworks predominantly treat regulatory institutions as exogenous forces or perfectly rational central planners capable of instantaneous optimal interventions \cite{han2016emergence}. In reality, governing complex socioecological systems is profoundly constrained by bounded rationality and inherent decision making \cite{zhu2026more,eich2025bounded}. Rather than pursuing an absolute global optimum, regulatory bodies typically operate under the satisficing principle. They tend to scale back enforcement once a satisfactory target is met to minimize administrative friction \cite{hua2023facilitating,hua2024coevolutionary}. By neglecting these institutional realities, current models lack the mathematical capacity to fully explore why real world environmental policies frequently fail to sustain long term cooperation. Therefore, there is a pressing need to expand the traditional framework into a tripartite coevolutionary system that explicitly captures the timescale separation among rapid behavioral adaptation, slow ecological shifts, and lagged institutional responses.

In this paper, we develop a tripartite coevolutionary framework integrating social behavior, ecological resources, and institutional dynamics to understand how the inherent cost of cooperation drives system evolution. We design a dynamic feedback mechanism where regulatory intensity automatically increases when the system fails to meet explicit social and environmental targets. Through theoretic analysis, our study reveals three key findings.  

\begin{itemize}
\item First, the system exhibits a practical internal equilibrium where partial cooperation, sustainable resources, and active regulation naturally coexist.
\item Second, we demonstrate that an optimal state of full cooperation, peak resources, and no intervention can be structurally stable.
\item Finally, we show that the system displays complex dynamics, including bistability and structurally stable limit cycles, which fundamentally explain why environmental policies often fall into persistent crackdown and rebound oscillatory traps.

\end{itemize}
\section{Model and Methods}
\label{sec:model}

\subsection{Game-theoretic model}

Here we construst a tripartite co-evolutionary game model that couples social behavior, ecological resource state, and institutional regulatory intensity.
We consider an infinite population of individuals choosing between cooperation ($C$) and defection ($D$). Cooperators opt for sustainable resource extraction, while defectors choose overexploitation of resources. Let $x \in [0,1]$ denote the frequency of cooperators, and $1-x$ the frequency of defectors in the population. We assume that the payoffs of individuals are not static but are coupled to both the resource state and the institutional intervention. Based on prior studies \cite{weitz2016oscillating}, we set that abundant resources incentivize individuals to free-ride. In contrast, severely degraded resources raise the relative incentive to cooperate. Besides, we introduce institutional intervention to curb free-riding behavior. Let $g$ denote the intensity of institutional intervention or the probability of supervision. Individuals adopting the defection strategy will be penalized by institutions. Let \(\beta_1\) represent the penalty intensity coefficient of institutions. Accordingly, the expected institutional penalty cost borne by defectors equals \(\beta_1 g\).

Let $n \in [0, 1]$ represent the state of the environmental resource where $n=1$ denotes abundant carrying capacity and $n=0$ denotes severe depletion. We formulate the state-dependent effective payoff matrix $A(n, g)$ as:
$$\begin{aligned}
A(n, g)= (1-n)\begin{pmatrix} T & P \\ R-\beta_1 g & S-\beta_1 g \end{pmatrix} \\
+ n\begin{pmatrix} R & S \\ T-\beta_1 g & P-\beta_1 g \end{pmatrix},\end{aligned}$$
where $R$ denotes the mutual benefit obtained when both individuals choose to cooperate, $S$ is the diminished payoff an individual receives when cooperating while their counterpart defects, $T$ represents the highest possible payoff achieved by free-riding while the other player cooperates, and $P$ indicates the suboptimal baseline payoff when both players defect. These parameters satisfy the setup of the standard Prisoner's Dilemma, $T > R > P > S$. We further introduce bottom-up peer punishment, wherein cooperators within the group spontaneously impose peer sanctions on defectors, with a penalty cost of $kx$. Here, $k$ denotes the per-unit intensity cost of peer punishment. Then the expected payoffs for cooperators ($\pi_C$) and defectors ($\pi_D$) are derived as:
$$\begin{aligned}
\pi_C &= x[(1-n)T + nR] + (1-x)[(1-n)P + nS],\\
\pi_D &= x[(1-n)(R-\beta g) + n(T-\beta g)] + (1-x)[(1-n)(S-\beta g) \\
&+ n(P-\beta g)]-kx.
\end{aligned}$$

The frequency of cooperator $x$ evolves according to the replicator equation \cite{schuster1983replicator,cressman2014replicator}, 
$$\begin{aligned}
	\dot{x} &= x(1-x)(\pi_C - \pi_D) \\
	&= x(1-x)\{(1 - 2n) \left[ x\Delta_{TR} + (1-x)\Delta_{PS} \right] + \beta_1 g + \kappa x\},
\end{aligned}$$
where $\Delta_{TR}=T-R$ denotes the relative payoff advantage of defection over cooperation when facing a cooperative opponent. $\Delta_{PS}=P-S$ denotes the relative payoff advantage of defection over cooperation when facing a defective opponent \cite{ito2024complete}.

Individual behavior directly affects environmental state. Specially, cooperation improves environmental quality, while defection causes degradation. Given the positive role of policy incentives, we define the environmental state equation as:
$$\dot{n} = n(1-n) \left[ \theta x-(1-x) + \eta g \right],$$
where $\eta g$ captures the direct positive impact of government incentives on resource restoration, \(\theta x\) denotes the positive effect generated by social cooperation, while \(-(1-x)\) represents the negative effect stemming from defection.

We introduce dual thresholds for the intensity of government incentives. Specifically, the incentive intensity rises if the environmental resource stock falls below the target value \(n_{\text{target}}\), and it also increases when the proportion of cooperators drops below the target level \(x_{\text{target}}\). Its dynamic equation can be formulated as follows:
$$\dot{g} = g(1-g) \left[ \alpha(n_{target} - n) + \beta_2(x_{target} - x) - \delta g \right],$$
where \(\alpha\) and \(\beta_2\) respectively measure the government’s sensitivity to ecological conditions and the level of social cooperation. \(-\delta g\) characterizes the incentive cost. For clarity, we summarize the model parameters and their meanings in Table \ref{tab:parameters}.
\begin{table}[ht]
	\centering
	\footnotesize
	\caption{Model parameters and their definitions}
	\label{tab:parameters}
	\begin{tabular}{cl}
		\hline
		\textbf{Parameter} & \textbf{Definition} \\
		\hline
		$\epsilon_1$ & Time-scale for environmental dynamics \\
		$\epsilon_2$ & Time-scale for institutional dynamics \\
		$\kappa$ & Peer pressure intensity \\
		$\beta_1$ & Effect of intervention on cooperation \\
		$\alpha$ & Sensitivity of intervention to ecological deficits \\
		$\beta_2$ & Sensitivity of intervention to cooperation deficits \\
		$\delta$ & Maintenance cost of intervention \\
		$n_{target}$ & Ecological target threshold \\
		$x_{target}$ & Social target threshold  \\
		$\theta$ & Contribution of cooperation to ecological recovery \\
		$\eta$ & Contribution of intervention to the environment \\
		\hline
	\end{tabular}
\end{table}

\subsection{The coevolutionary dynamical system}
\label{subsec:dynamical_system}

The social-ecological system described above operates on heterogeneous time scales, where individual strategy shifts often occur more rapidly than environmental restoration or policy adjustments. To capture this, we introduce the reaction rates $\epsilon_1$ and $\epsilon_2$ into the ecological state equations and government intervention. The complete social-ecological dynamical system is defined as follows:

\begin{equation*}
	\label{eq:dynamical_system}
	\begin{cases} 
		\dot{x} =  x(1-x) \{(1 - 2n) \left[ x\Delta_{TR} + (1-x)\Delta_{PS} \right] + \beta_1 g + \kappa x\}, \\[8pt]
		\dot{g} = \epsilon_2 g(1-g)  \left[ \alpha(n_{target} - n) + \beta_2(x_{target} - x) - \delta g \right], \\[8pt]
		\dot{n} =\epsilon_1 n(1-n) \left[ \theta x-(1-x) + \eta g \right],
	\end{cases}
\end{equation*}
where $0<\epsilon_1< \epsilon_2 \ll 1$ define the evolutionary tempo of the system.  In the next section, we provide a detailed analysis of the evolutionary dynamics of this coupled system.
\section{Results}
In this section we provide a detailed theoretical analysis of the above tripartite coevolutionary game model and provide numerical experiments for verification. The argument will be elaborated through some progressive theorems: firstly, analyze the monostable dynamics of the system, including corner equilibrium points, boundary equilibrium points, surface equilibrium points, and internal equilibrium points. Subsequently, analyze the bistable dynamical results of the system. Finally, the condition for the occurrence of periodic oscillations is provided. 

\begin{theorem}
	For a tripartite coevolutionary dynamical system defined in the three-dimensional state space $\Omega=\{(x, g, n) \mid 0 \le x, g, n \le 1 \}$, there are always eight corner equilibria, which are:
	\begin{equation*}
		\begin{aligned}
			E_1(0,0,0), \quad & E_2(1,0,0), \quad & E_3(0,1,0), \quad & E_4(0,0,1) \\
			E_5(1,1,0), \quad & E_6(1,0,1), \quad & E_7(0,1,1), \quad & E_8(1,1,1).
		\end{aligned}
	\end{equation*}
	Among all eight corner equilibrium points, $E_6(1,0,1)$ is stable when $\Delta_{TR}<\kappa$ and $E_7(0,1,1)$ is stable when $\beta_1<\Delta_{PS}$, $\alpha(n_{target}-1)+\beta_2 x_{target}-\delta>0$, and $1-\eta<0$.
\end{theorem}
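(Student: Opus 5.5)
The plan is to get the equilibria by direct substitution and then decide stability from the linearization, which at every corner of $\Omega$ turns out to be diagonal. Each right-hand side carries the logistic factor of its own variable: $x(1-x)$, $g(1-g)$ and $n(1-n)$. Hence, whenever each of $x,g,n$ is $0$ or $1$, all three derivatives vanish, whatever the values of the brackets. This immediately gives the eight corner equilibria $E_1,\dots,E_8$ for every choice of parameters.

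For stability I will write $\dot x = x(1-x)F(x,g,n)$, $\dot g=\epsilon_2 g(1-g)G(x,g,n)$ and $\dot n=\epsilon_1 n(1-n)H(x,g,n)$, with $F,G,H$ the bracketed expressions. The off-diagonal partial derivatives, for instance $\partial \dot x/\partial g = x(1-x)\beta_1$, keep the factor $x(1-x)$ and therefore vanish at a corner. The diagonal entries reduce to $\partial \dot x/\partial x=(1-2x)F$, $\partial\dot g/\partial g=\epsilon_2(1-2g)G$ and $\partial \dot n/\partial n=\epsilon_1(1-2n)H$, because the terms involving $\partial F/\partial x$ and the like are multiplied by a vanishing logistic factor. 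So at each corner the Jacobian is diagonal, and its eigenvalues are $\pm F$, $\pm\epsilon_2 G$ and $\pm\epsilon_1 H$ evaluated there, with a plus sign when the coordinate is $0$ and a minus sign when it is $1$. If all three eigenvalues are strictly negative, the corner is hyperbolic and asymptotically stable by the linearization (Hartman--Grobman) theorem. The theorem lists sufficient conditions for this, so I only need to verify these two cases and do not need to classify the other six corners.

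For $E_6(1,0,1)$ we have $F=(1-2)\Delta_{TR}+\kappa=\kappa-\Delta_{TR}$, so $\lambda_x=\Delta_{TR}-\kappa$, which is negative exactly when $\Delta_{TR}<\kappa$. Next, $\lambda_g=\epsilon_2[\alpha(n_{target}-1)+\beta_2(x_{target}-1)]$. This is negative because the targets lie in $[0,1)$, at least one strictly below $1$, and $\alpha,\beta_2>0$. Finally $\lambda_n=-\epsilon_1\theta<0$, since $\theta>0$.

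For $E_7(0,1,1)$ we have $F=-\Delta_{PS}+\beta_1$, so $\lambda_x=\beta_1-\Delta_{PS}<0$ exactly when $\beta_1<\Delta_{PS}$. Next, $\lambda_g=-\epsilon_2[\alpha(n_{target}-1)+\beta_2x_{target}-\delta]$, which is negative under the stated inequality. Finally $\lambda_n=-\epsilon_1(-1+\eta)=\epsilon_1(1-\eta)$, which is negative when $1-\eta<0$.

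The only delicate point is the sign of $\lambda_g$ at $E_6$. The theorem states no explicit condition for it, so I must make explicit the standing assumptions that are implicit in the model: $\alpha,\beta_2,\theta>0$ and targets strictly inside $(0,1)$. I must also check that no eigenvalue vanishes, since a zero eigenvalue would make the linearization inconclusive. Everything else is routine bookkeeping.
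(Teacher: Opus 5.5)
Your proposal is correct and follows essentially the same route as the paper: the logistic factors make all eight corners fixed points, the Jacobian is diagonal at every corner, and your eigenvalues at $E_6$ and $E_7$ coincide with the paper's $\lambda_{6,i}$ and $\lambda_{7,i}$. The only difference is that you make explicit the standing assumptions the paper uses tacitly to conclude $\lambda_{6,2}<0$ and $\lambda_{6,3}<0$, namely $\alpha,\beta_2,\theta>0$ and $n_{target},x_{target}<1$; this is a welcome clarification rather than a change of method.
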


\begin{proof}
	Let the replicator equation system satisfy $\frac{dx}{dt}=0, \frac{dg}{dt}=0, \frac{dn}{dt}=0 $. Due to the fact that the evolution rates of state variables $x, g$, and $n$ contain factors $x (1-x), g (1-g)$, and $n (1-n) $, respectively, when the variable values are $0$ or $1$, the evolution rates of each dimension of the system are always zero. Therefore, the above eight corner must be fixed points of the system.
	
	To analyze local asymptotic stability, we can calculate the Jacobian matrix $J$ of the system at any point $(x, g, n)$. Substituting eight corner equilibrium points into the matrix $J$ makes it easy to determine that the equilibrium points of the other six vertices are unstable, as their Jacobian matrix has at least one positive eigenvalue. We focus on examining two key boundary points, $E_6 (1,0,1)$ and $E_7 (0,1,1)$, that can become evolutionarily stable. For ideal autonomous state $E_6 (1,0,1) $: By substituting its coordinates into the Jacobian matrix, three eigenvalues can be obtained, which are
			$\lambda_{6,1} = \Delta_{TR} - \kappa,$
			$\lambda_{6,2} =\epsilon_2 \left[ \alpha(n_{target} - 1) + \beta_2(x_{target} - 1) \right],$
			$\lambda_{6,3} =-\epsilon_1 \theta.$
	Therefore, the condition for $E_6$ to become a locally asymptotically stable point is $\Delta_{TR} < \kappa$.
	For $E_7 (0,1,1)$, the eigenvalues are
		$	\lambda_{7,1}=\beta_1 - \Delta_{PS},$
		$	\lambda_{7,2}=\epsilon_2 \left[ \delta - \alpha(n_{target} - 1) - \beta_2 x_{target} \right],$
			$\lambda_{7,3}=\epsilon_1(1 - \eta).$
	The condition for $E_7$ to become another ESS in the system is $\lambda_{7,1}<0 $, $\lambda_{7,2}<0 $, $\lambda_{7,3}<0 $.
\end{proof}

\begin{figure}[h]
	\centering 
	\includegraphics[width=0.5\textwidth]{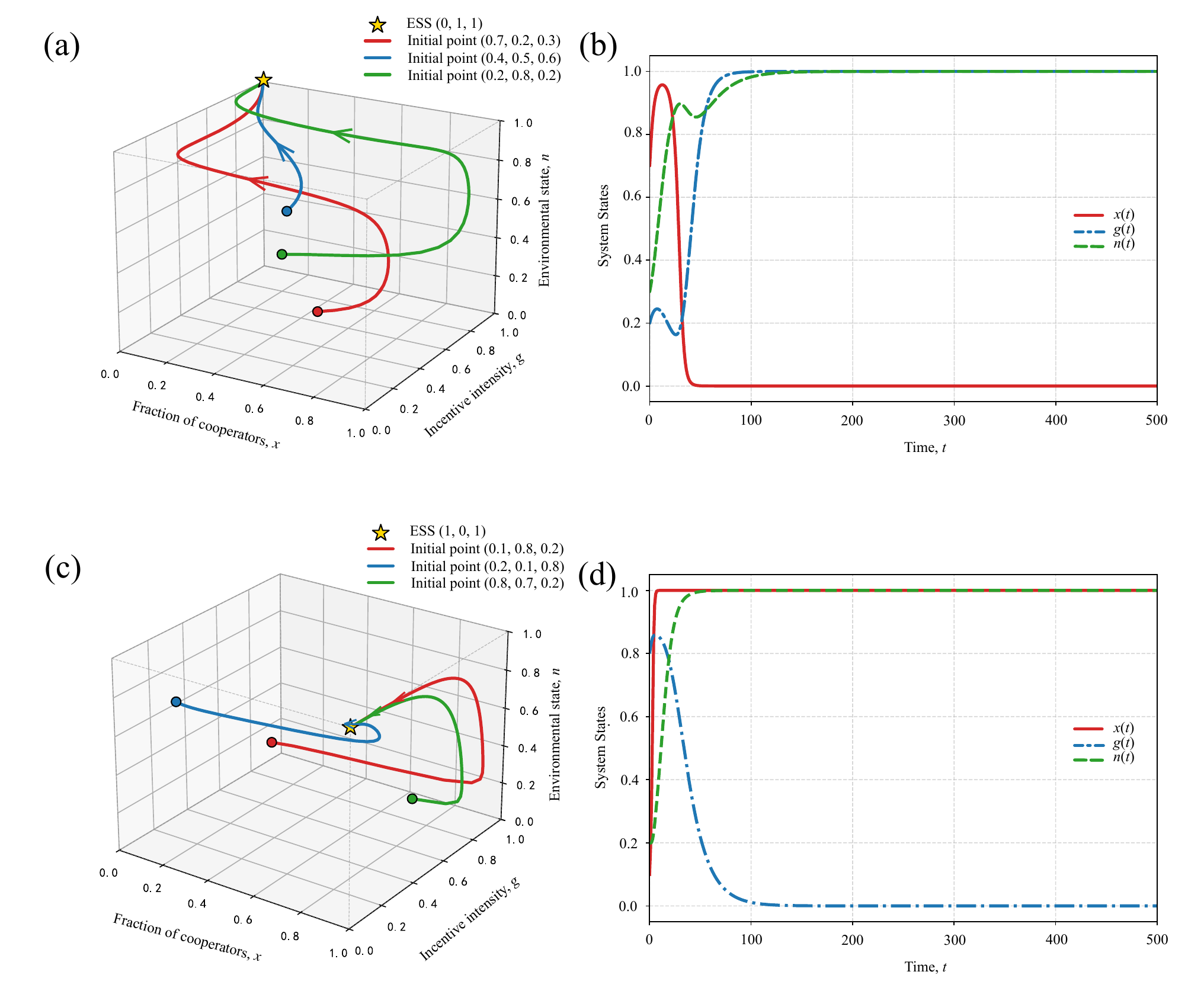}  
	\caption{\textbf{Global mono-stability of the corner equilibrium points.} 3D phase portrait and corresponding frequency-time plot illustrating the global stability of the autonomous ideal state $(1, 0, 1)$ for panels (a) and (b), and $(0, 1, 1)$ for panels (c) and (d). The parameter values are  $\kappa = 0.1$, $\Delta_{PS} = 0.6$, $\beta_1 = 0.2$, $\beta_1 = 0.8$, $\eta = 1.5$ in panels (a) and (b); for (c-d) $\kappa = 0.8$,  $\Delta_{PS} = 0.3$, $\beta_1 = 0.5$, $\beta_1 = 0.5$, $\eta = 0.8$. Other baseline parameters remain consistent: $\Delta_{TR} = 0.5$, $\epsilon_1=0.1$, $\epsilon_2=0.3$, $\alpha=0.5$, $x_{target}=0.8$, $n_{target}=0.8$, $\theta=1$, $\delta=0.2$.
	}
	\label{figure1}  
\end{figure} 
In Figure \ref{figure1}, we provide numerical examples to validate the theoretical analysis results mentioned above. In the first column of Figure \ref{figure1}, we present a 3D phase diagram, while the second column shows the evolution of the system state over time. When the model parameters meet $\beta_1<\Delta_{PS}$, $\alpha(n_{target}-1)+\beta_2 x_{target}-\delta>0$, and $1-\eta<0$, we find that curves starting from different initial points eventually converge to $E_7 (0,1,1)$, which means that even if the environment is abundant and the government adopts the strongest control measures, all cooperators disappear (Figure \ref{figure1}(a) and (b)). In addition, when the model parameters satisfy $\Delta_{TR}<\kappa $, we find that the trajectories of the phase space eventually converge to $E_6 (1,0,1)$, which means that even if the external incentive intensity is 0, peer punishment can still encourage all individuals to choose cooperation, and the environment is in a state of abundance (Figure \ref{figure1}(c) and (d)).

\begin{figure}[h]
	\centering 
	\includegraphics[width=0.5\textwidth]{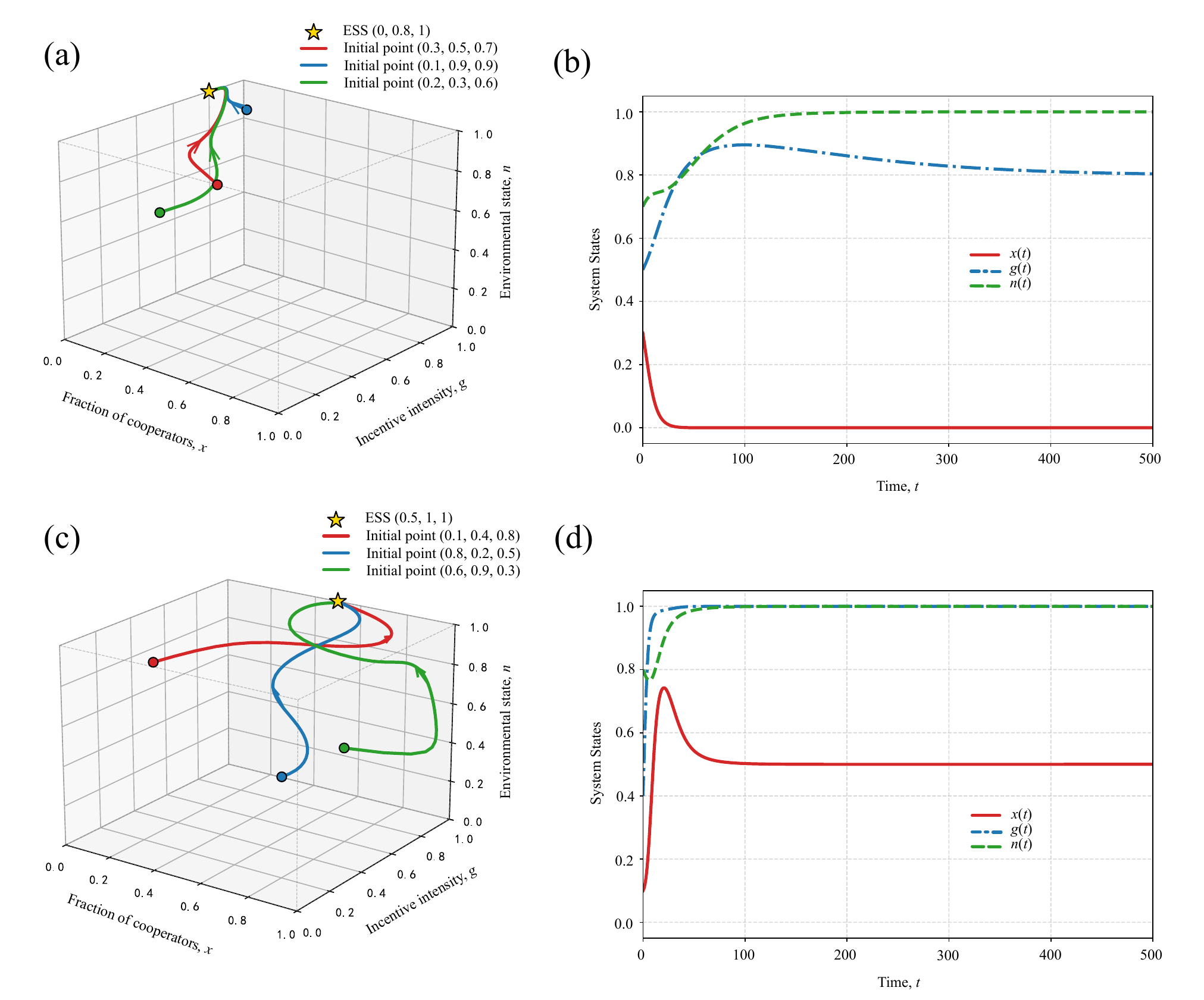}  
	\caption{\textbf{Global mono-stability and temporal dynamics of the boundary equilibrium points.} The 3D phase portrait and corresponding frequency-time plot for the boundary equilibrium $(0, 0.8, 1)$ in panels (a) and (b), $(0.5, 1, 1)$ for panels (c) and (d). Parameters are $\Delta_{TR} = 0.5$, $\Delta_{PS} = 0.6$, $\beta_1 = 0.2$, $\beta_2 = 0.4$, $\alpha=0.5$, $\eta = 1.5$, $x_{target}=0.65$, $\delta=0.2$ in panels (a) and (b); for (c-d) $\Delta_{TR} = 0.9$, $\Delta_{PS} = 0.2$, $\beta_1 = 0.5$, $\beta_2 = 2$, $\alpha=0.2$, $\eta = 0.5$, $x_{target}=0.9$, $\delta=0.1$. Other parameters are $\kappa = 0.1$, $\epsilon_1=0.1$, $\epsilon_2=0.3$, $n_{target}=0.8$, $\theta=1$.
	}
	\label{figure2}  
\end{figure} 
In addition to the eight corner equilibrium points mentioned above, there are also some boundary equilibrium points in the coupled system mentioned above, such as $(0, \frac{\alpha n_{target}+\beta_2 x_{target}}{\delta}, 0)$, $(1, \frac{\alpha n_{target}+\beta_2 (x_{target}-1)}{\delta}, 0)$, $(0, \frac{\alpha (n_{target}-1)+\beta_2 x_{target}}{\delta}, 1)$, $(\frac{\Delta_{PS}}{\Delta_{PS}-\Delta_{TR}+\kappa}, 0,1)$, $(\frac{\Delta_{PS}-\beta_1}{\Delta_{PS}-\Delta_{TR}+\kappa}, 1, 1)$. In the following theorem, we provide the conditions for the stability of boundary equilibrium points $(0, \frac{\alpha (n_{target}-1)+\beta_2 x_{target}}{\delta}, 1)$ and $(\frac{\Delta_{PS}-\beta_1}{\Delta_{PS}-\Delta_{TR}+\kappa}, 1, 1)$.
\begin{theorem}\label{theo3.2}
	If and only if $0<\alpha (n_{target} -1)+\beta_2 x_{target}<\delta$ is satisfied, the equilibrium point $(0, \frac{\alpha (n_{target}-1)+\beta_2 x_{target}}{\delta}, 1)$ exists.
	The condition for this point to be locally asymptotically stable (ESS) is $	\beta_1 g ^ *<\Delta_{PS}$ and $\eta g^*>1$.
	
	If and only if $0<\frac{\Delta_{PS} - \beta_1} {\Delta_{PS} - \Delta_{TR}+\ kappa}<1$ is satisfied, the equilibrium point $(\frac{\Delta_{PS}-\beta_1}{\Delta_{PS}-\Delta_{TR}+\kappa}, 1, 1)$ exists.
	The condition for this point to be locally asymptotically stable (ESS) is $\Delta_{PS} - \Delta_{TR} + \kappa < 0,  \delta < \alpha(n_{target} - 1) + \beta_2(x_{target} - \frac{\Delta_{PS}-\beta_1}{\Delta_{PS}-\Delta_{TR}+\kappa})$ and  $(\theta+1)\frac{\Delta_{PS}-\beta_1}{\Delta_{PS}-\Delta_{TR}+\kappa}+\eta>1$.
\end{theorem}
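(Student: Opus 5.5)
The plan is to treat each of the two boundary points in the same way. First, solve the equilibrium equations restricted to the relevant edge of $\Omega$ to get existence. Second, evaluate the Jacobian $J$ of the system at the point and show that, because of the factors $x(1-x)$, $g(1-g)$ and $n(1-n)$, it becomes (after reordering variables) triangular. The eigenvalues can then be read off the diagonal, and the stability conditions are exactly the conditions that these diagonal entries are negative.

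\emph{First point.} On the edge $x=0$, $n=1$ we have $\dot x=\dot n=0$ automatically. For $g\in(0,1)$, $\dot g=0$ reduces to $\alpha(n_{target}-1)+\beta_2 x_{target}-\delta g=0$, so $g^*=\frac{\alpha(n_{target}-1)+\beta_2 x_{target}}{\delta}$. This lies in $(0,1)$ if and only if $0<\alpha(n_{target}-1)+\beta_2x_{target}<\delta$ (using $\delta>0$), which gives existence.

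For the Jacobian, note that every entry of the $x$-row carries the factor $x$ except $\partial\dot x/\partial x$. At $x=0$ this entry equals $(1-2n)\Delta_{PS}+\beta_1 g^*=\beta_1g^*-\Delta_{PS}$. Likewise, in the $n$-row only $\partial\dot n/\partial n$ survives at $n=1$. Since $\frac{d}{dn}[n(1-n)]\big|_{n=1}=-1$, this entry is $-\epsilon_1(-1+\eta g^*)$. In the $g$-row the diagonal entry is $\epsilon_2 g^*(1-g^*)(-\delta)<0$, because the bracket vanishes at $g^*$. The matrix is therefore triangular in the ordering $(x,n,g)$, with eigenvalues $\beta_1g^*-\Delta_{PS}$, $-\epsilon_1(\eta g^*-1)$ and $-\epsilon_2\delta g^*(1-g^*)$. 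The last one is automatically negative, so local asymptotic stability is equivalent to $\beta_1g^*<\Delta_{PS}$ and $\eta g^*>1$.

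\emph{Second point.} On the edge $g=1$, $n=1$, setting the $x$-bracket to zero gives $-[x\Delta_{TR}+(1-x)\Delta_{PS}]+\beta_1+\kappa x=0$. This is equivalent to $x(\Delta_{PS}-\Delta_{TR}+\kappa)=\Delta_{PS}-\beta_1$, so $x^*=\frac{\Delta_{PS}-\beta_1}{\Delta_{PS}-\Delta_{TR}+\kappa}$, and the point exists exactly when $x^*\in(0,1)$.

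At $(x^*,1,1)$ the $g$-row and the $n$-row each reduce to their diagonal entries, since $g(1-g)$ and $n(1-n)$ vanish there:
\[
-\epsilon_2\left[\alpha(n_{target}-1)+\beta_2(x_{target}-x^*)-\delta\right]
\quad\text{and}\quad
-\epsilon_1\left[\theta x^*-(1-x^*)+\eta\right].
\]
The $x$-row diagonal entry is $x^*(1-x^*)$ times the $x$-derivative of the bracket, namely $x^*(1-x^*)(\Delta_{PS}-\Delta_{TR}+\kappa)$, because the bracket itself vanishes at $x^*$. Requiring all three entries to be negative yields exactly the three stated conditions:
\[
\Delta_{PS}-\Delta_{TR}+\kappa<0,\qquad
\delta<\alpha(n_{target}-1)+\beta_2(x_{target}-x^*),\qquad
(\theta+1)x^*+\eta>1 .
\]

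The main obstacle is bookkeeping rather than conceptual. I need to verify carefully that the off-diagonal entries vanish in the claimed rows, so that the matrix really is triangular. I also need the correct sign coming from $\frac{d}{dn}[n(1-n)]=-1$ (and similarly for $g$) at the upper face, since a sign slip there flips the conditions. Finally, I need to note that "locally asymptotically stable" is meant in the hyperbolic sense: strict inequalities on all eigenvalues, with the triangular structure guaranteeing that no complex eigenvalues arise.
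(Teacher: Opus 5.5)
Your proposal is correct and follows the paper's proof. In both cases you solve for $g^*$ (respectively $x^*$) on the relevant edge, evaluate the Jacobian there, read off the three eigenvalues, and require them to be negative. Your explicit check of the triangular structure, and your retention of the factor $\delta$ in the eigenvalue $-\epsilon_2\delta g^*(1-g^*)$ (which the main-text proof drops but the appendix keeps), only make explicit what the paper asserts.
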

\begin{proof}
	Let $x=0, n=1, \dot{g}=0 $, and solve for the expression $g^*= \frac{\alpha (n_{target}-1)+\beta_2 x_{target}}{\delta} $. Substituting $(0, g^*, 1)$ into $J$ yields the eigenvalues $\lambda_{8,1} = \beta_1 g^* - \Delta_{PS}$, $\lambda_{8,2} = - \epsilon_2 g^*(1-g^*)$, $\lambda_{8,3} = \epsilon_1(1 - \eta g^*)$.
	Due to $g^* \in (0,1) $, $\lambda_{8,2}$ is always less than 0. By setting $\lambda_{8,1}<0$ and $\lambda_{8,3}<0$, we obtain $1/\eta<g ^ *<\Delta_{PS}/\beta_1 $. By combining the natural premise of $g^* \in (0,1) $and substituting it into the algebraic formula of $g^*$, conclusion can be obtained.
	
	Let $g=1, n=1, \dot {x}=0 $, and solve for the expression $x^*=\frac{\Delta_{PS}-\beta_1}{\Delta_{PS}-\Delta_{TR}+\kappa} $. Substituting the Jacobian matrix yields
	$\lambda_{9,1} =  x^*(1-x^*) (\Delta_{PS} - \Delta_{TR} + \kappa)$,
	$\lambda_{9,2} = \epsilon_2\left[ \delta - \alpha(n_{target} - 1) - \beta_2(x_{target} - x^*) \right]$,
	$\lambda_{9,3} = \epsilon_1[1 - \eta - (\theta + 1)x^*]$.
	To make $\lambda_{9,1}<0 $, there must be $(\Delta_ {PS} - \Delta_ {TR}+\kappa)<0 $. At this point, in order to ensure the existence of $0<x^*<1 $, its molecule must have $\Delta_{PS} - \beta_1<0$ and $\Delta_{PS} - \beta_1 > \Delta_{PS} - \Delta_{TR} + \kappa$. Let $\lambda_{9,2}, \lambda_{9,3}<0$ to obtain the conclusion. Theorem \ref{theo3.2} has been proven.
\end{proof}

Next, we provide numerical examples to validate the theoretical analysis mentioned above. Figure \ref{figure2} presents the phase diagram of a three-dimensional coupled game system and the evolution of the system states over time when the boundary equilibrium points $(0, \frac{\alpha (n_{target}-1)+\beta_2 x_{target}}{\delta}, 1) $ and $(\frac{\Delta_{PS}-\beta_1}{\Delta_{PS}-\Delta_{TR}+\kappa}, 1, 1)$ are stable. When the model parameters satisfy $0<\alpha (n_{target} -1)+\beta_2 x_{target}<\delta$, we find that there exists a boundary equilibrium point $(0, 0.8, 1)$, and all internal trajectories in the phase diagram converge to this equilibrium point when parameters satisfy $\beta_1 g ^ *<\Delta_{PS}$ and $\eta g^*>1$ (Figure \ref{figure2}(a) and (b)). We turn to another boundary equilibrium point, and when the model parameters satisfy $0<\frac{\Delta_{PS} - \beta_1} {\Delta_{PS} - \Delta_{TR}+\ kappa}<1$, we find that the boundary equilibrium point $(0.5, 1, 1)$ exists, and when the parameters satisfy  $\Delta_{PS} - \Delta_{TR} + \kappa < 0,  \delta < \alpha(n_{target} - 1) + \beta_2(x_{target} - x^*)$ and  $(\theta+1) x ^ *+\ eta>1$, all trajectories converge to this point (Figure \ref{figure2}(c) and (d)). The former means that when the government has a high-intensity regulatory system, it can ensure abundant resources but all individuals choose free riding behavior. The latter means that the government fully supervises, ensures a constant level of cooperation, and resources are always abundant.

\begin{figure}[t!]
	\centering 
	\includegraphics[width=0.5\textwidth]{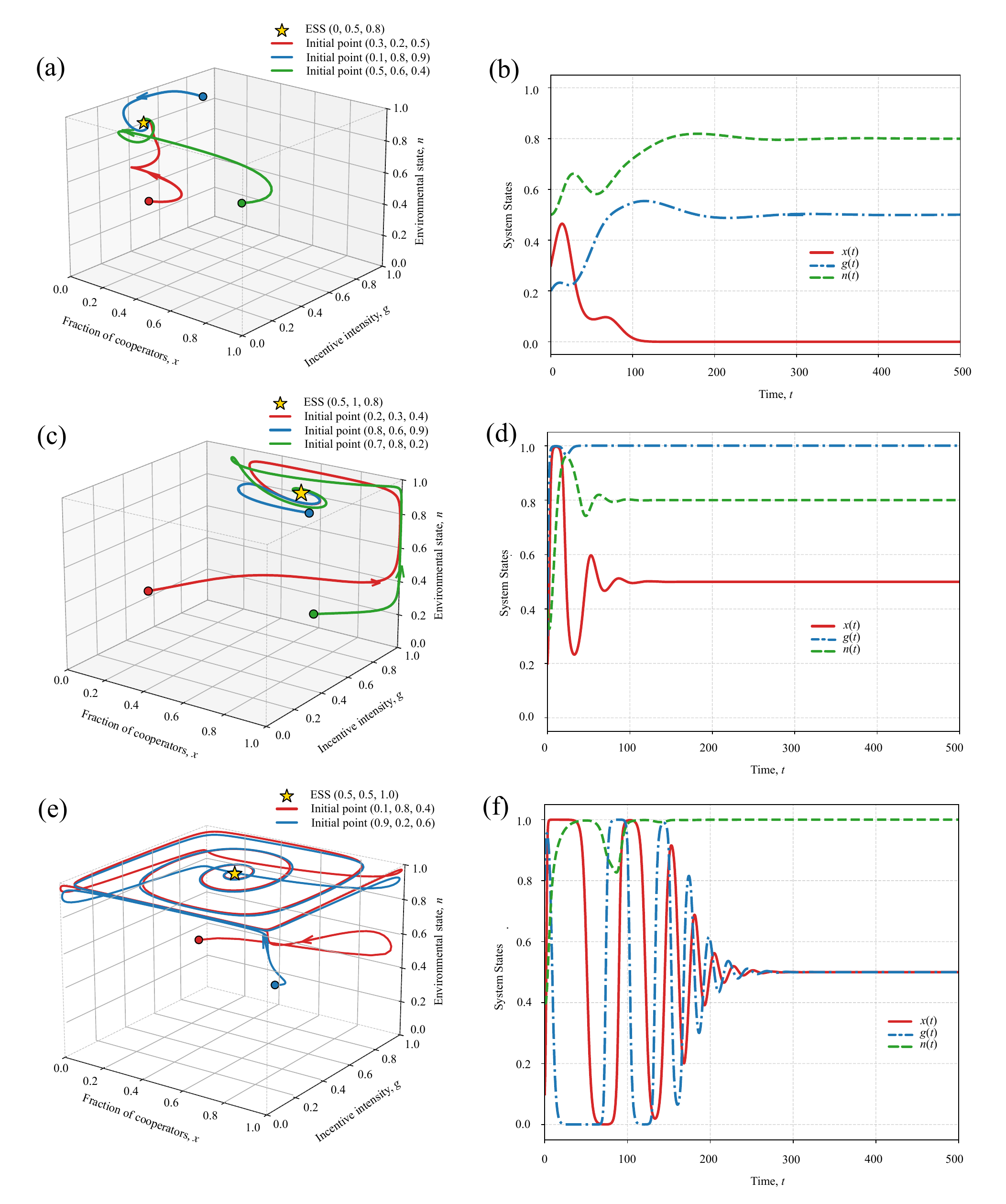}  
	\caption{\textbf{Global mono-stability of the face equilibrium points.} The 3D phase portraits and corresponding frequency-time plots illustrate that the system asymptotically converges to the respective face equilibria $(0, 0.5, 0.8)$ for (a-b), $(0.5, 1, 0.8)$ for (c-d), and $(0.5, 0.5, 1.0)$ for (e-f). Parameters are $\Delta_{TR} = 0.6$, $\Delta_{PS} = 0.5$, $\beta_1 = 0.2$, $\beta_2 = 0.5$, $\alpha=0.5$, $\eta = 2$, $x_{target}=0.6$, $\delta=0.4$,  $\kappa = 0.1$, $\epsilon_1=0.1$, $\epsilon_2=0.3$, $n_{target}=0.6$, $\theta=1$ in panels (a) and (b); for (c-d) $\Delta_{TR} = 0.8$, $\Delta_{PS} = 0.2$, $\beta_1 = 0.25$, $\beta_2 = 1$, $\alpha=2$, $\eta = 0.4$, $x_{target}=0.8$, $\delta=0.2$,  $\kappa = 0.1$, $\epsilon_1=0.2$, $\epsilon_2=0.5$, $n_{target}=0.9$, $\theta=0.2$; for (e-f)  $\Delta_{TR} = 0.5$, $\Delta_{PS} = 0.2$, $\beta_1 = 0.4$, $\beta_2 = 1.5$, $\alpha=1$, $\eta = 0.8$, $x_{target}=0.7$, $\delta=0.4$,  $\kappa = 0.6$, $\epsilon_1=0.1$, $\epsilon_2=0.5$, $n_{target}=0.9$, $\theta=0.6$.
	}
	\label{fig3}  
\end{figure} 
On the boundary plane of the three-dimensional state space $\Omega$ of the tripartite co-evolutionary dynamical system, there are four surface equilibrium points $E(0, \frac{1}{\eta}, n_{target}+\frac{\beta_2 x_{target}-\delta/ \eta}{\alpha}) $, $E(\frac{1-\eta}{1+\theta}, 1, 1/2+\frac{\beta_1 +\kappa \frac{1-\eta}{1+\theta}}{2[\frac{1-\eta}{1+\theta} \Delta_{TR}+(1-\frac{1-\eta}{1+\theta})\Delta_{PS}]})$, $E(x_1^*, g_1^*, 1)$, and $E(\frac{1}{1+\theta}, 0, 1/2+\frac{\kappa \frac{1}{1+\theta}}{2[\frac{1}{1+\theta} \Delta_{TR}+(1-\frac{1}{1+\theta})\Delta_{PS}]})$, where $x_1^*$ and $g_1^*$ are given by the solution of the following system of linear equations, and must satisfy $x_1^*, g_1^* \in (0,1)$:
\begin{equation}
	\begin{cases}
		(\kappa - \Delta_{TR} + \Delta_{PS})x_1^* + \beta_1 g_1^* = \Delta_{PS} \\
		\beta_2 x_1^* + \delta g_1^* = \alpha(n_{target} - 1) + \beta_2 x_{target}.
	\end{cases}
\end{equation}
We can obtain
$x_1^*=\frac{\Delta_{PS} \delta - \beta_1 [\alpha(n_{target} - 1) + \beta_2 x_{target}]}{\delta(\kappa - \Delta_{TR} + \Delta_{PS}) - \beta_1 \beta_2}$ and $g_1^*=\frac{(\kappa - \Delta_{TR} + \Delta_{PS})[\alpha(n_{target} - 1) + \beta_2 x_{target}] - \Delta_{PS} \beta_2}{\delta(\kappa - \Delta_{TR} + \Delta_{PS}) - \beta_1 \beta_2}$.
The existence of these equilibrium points and the conditions for becoming evolutionarily stable are present in the following Theorem.

\begin{theorem}\label{theo3.3}
	The equilibrium point $E(0, \frac{1}{\eta}, n_{target}+\frac{\beta_2 x_{target}-\delta/ \eta}{\alpha})$ exists only when the system parameters satisfy $\eta>1$ and $n_{target}+\frac{\beta_2 x_{target}-\delta/ \eta}{\alpha} \in (0,1)$ is obtained. The sufficient and necessary condition for this point to be locally asymptotically stable is $(1 - 2n_{target}+2\frac{\beta_2 x_{target}-\delta/ \eta}{\alpha})\Delta_{PS} + \frac{\beta_1}{\eta} < 0$.\\
	The equilibrium point $E(\frac{1-\eta}{1+\theta}, 1, 1/2+\frac{\beta_1 +\kappa \frac{1-\eta}{1+\theta}}{2[\frac{1-\eta}{1+\theta} \Delta_{TR}+(1-\frac{1-\eta}{1+\theta})\Delta_{PS}]})$ exists when $0<\frac{1-\eta}{1+\theta}<1$ and $0<1/2+\frac{\beta_1 +\kappa \frac{1-\eta}{1+\theta}}{2[\frac{1-\eta}{1+\theta} \Delta_{TR}+(1-\frac{1-\eta}{1+\theta})\Delta_{PS}]})<1$. The stability of this equilibrium point requires the following conditions to be met
	\begin{equation}\label{eq7}
		\begin{cases}
			\alpha(n_{target} -1/2+\frac{\beta_1 +\kappa \frac{1-\eta}{1+\theta}}{2[\frac{1-\eta}{1+\theta} \Delta_{TR}+(1-\frac{1-\eta}{1+\theta})\Delta_{PS}]}) \\+ \beta_2(x_{target} - \frac{1-\eta}{1+\theta}) > \delta \\
			\frac{1-\eta}{1+\theta}\Delta_{TR} + (1-\frac{1-\eta}{1+\theta})\Delta_{PS} > 0 \\
			-\frac{\beta_1 +\kappa \frac{1-\eta}{1+\theta}}{\frac{1-\eta}{1+\theta} \Delta_{TR}+(1-\frac{1-\eta}{1+\theta})\Delta_{PS}}(\Delta_{TR} - \Delta_{PS}) + \kappa < 0.
		\end{cases}
	\end{equation}
	The equilibrium point $E(x_1^*, g_1^*, 1)$ exists when $x_1^*, g_1^* \in (0,1)$. The stability of this equilibrium point requires the following conditions to be met
	\begin{equation}\label{eq8}
		\begin{cases}
			\theta x_1^* - (1-x_1^*) + \eta g_1^* > 0 \\
			\delta(\kappa - \Delta_{TR} + \Delta_{PS}) - \beta_1 \beta_2 < 0 \\
			\frac{1}{\epsilon_1} x_1^*(1-x_1^*)(\Delta_{PS} - \Delta_{TR} + \kappa) - \frac{\delta}{\epsilon_2} g_1^*(1-g_1^*) < 0.
		\end{cases}
	\end{equation}
	$E(\frac{1}{1+\theta}, 0, \frac{1}{2}+\frac{\kappa \frac{1}{1+\theta}}{2[\frac{1}{1+\theta} \Delta_{TR}+(1-\frac{1}{1+\theta})\Delta_{PS}]})$ is unstable.
\end{theorem}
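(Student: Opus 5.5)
We would argue exactly as in the proofs of the two preceding theorems. Existence comes from restricting the system to the relevant face and solving the two remaining nullcline equations. Stability comes from evaluating the Jacobian $J$ of the full system at the resulting point. The key structural observation is that on a face where one coordinate $z\in\{x,g,n\}$ is pinned at $0$ or $1$, the row of $J$ for $\dot z$ has only one nonzero entry. That entry is $\pm$ the bracketed growth rate of $z$, where $\partial_z[z(1-z)]=1-2z$ equals $+1$ at $z=0$ and $-1$ at $z=1$. Expanding the characteristic polynomial along that row therefore splits the spectrum into:
\begin{itemize}
\item one \emph{transversal} eigenvalue, given by that signed bracket;
\item the eigenvalues of the $2\times 2$ block of $J$ acting within the face.
\end{itemize}
Local asymptotic stability then means the transversal eigenvalue is negative and the $2\times2$ block has negative trace and positive determinant.

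\textbf{First point (face $x=0$).} Setting $\theta x-(1-x)+\eta g=0$ with $x=0$ gives $g^*=1/\eta$, which lies in $(0,1)$ iff $\eta>1$. Then $\dot g=0$ gives $\alpha(n_{target}-n)+\beta_2x_{target}-\delta/\eta=0$. Solving for $n^*$ and requiring $n^*\in(0,1)$ gives the existence statement. The transversal eigenvalue is $(1-2n^*)\Delta_{PS}+\beta_1/\eta$, which yields the stated condition once $n^*$ is substituted. In the face block, the diagonal terms are:
\begin{itemize}
\item $\partial_g\dot g=-\epsilon_2\delta g^*(1-g^*)<0$;
\item $\partial_n\dot n=0$, since the $n$-bracket vanishes.
\end{itemize}
The off-diagonal terms are $\partial_n\dot g=-\epsilon_2\alpha g^*(1-g^*)$ and $\partial_g\dot n=\epsilon_1\eta n^*(1-n^*)$. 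So the trace is negative and the determinant $\epsilon_1\epsilon_2\alpha\eta\, g^*(1-g^*)n^*(1-n^*)$ is positive. The face block is therefore automatically stable, which explains why only one condition appears in the statement.

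\textbf{Second point (face $g=1$).} The $n$-nullcline gives $x^*=(1-\eta)/(1+\theta)$. The $x$-bracket then gives $(1-2n^*)D+\beta_1+\kappa x^*=0$ with $D=x^*\Delta_{TR}+(1-x^*)\Delta_{PS}$, and solving this yields the stated $n^*$. The transversal eigenvalue is $-\epsilon_2[\alpha(n_{target}-n^*)+\beta_2(x_{target}-x^*)-\delta]$. Its negativity is the first line of \eqref{eq7}; we will check the sign convention against the printed condition. In the $(x,n)$ block:
\begin{itemize}
\item $\partial_n\dot n=0$;
\item $\partial_x\dot n=\epsilon_1(1+\theta)n^*(1-n^*)>0$;
\item $\partial_n\dot x=-2x^*(1-x^*)D$;
\item $\partial_x\dot x=x^*(1-x^*)[(1-2n^*)(\Delta_{TR}-\Delta_{PS})+\kappa]$.
\end{itemize}
Substituting $1-2n^*=-(\beta_1+\kappa x^*)/D$ makes the trace condition the third line of \eqref{eq7}. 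The determinant is $2\epsilon_1(1+\theta)x^*(1-x^*)n^*(1-n^*)D$, which is positive iff $D>0$, giving the second line.

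\textbf{Third point (face $n=1$).} The two linear equations give $(x_1^*,g_1^*)$ by Cramer's rule. The transversal eigenvalue is $-\epsilon_1[\theta x_1^*-(1-x_1^*)+\eta g_1^*]$, whose negativity is the first line of \eqref{eq8}. The $(x,g)$ block is
\[
\begin{pmatrix} x_1^*(1-x_1^*)(\Delta_{PS}-\Delta_{TR}+\kappa) & \beta_1x_1^*(1-x_1^*)\\ -\epsilon_2\beta_2 g_1^*(1-g_1^*) & -\epsilon_2\delta g_1^*(1-g_1^*)\end{pmatrix}.
\]
Its determinant is $-\epsilon_2x_1^*(1-x_1^*)g_1^*(1-g_1^*)\,[\delta(\kappa-\Delta_{TR}+\Delta_{PS})-\beta_1\beta_2]$, so positivity gives the second line. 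The trace condition gives the third line, up to the positive time-scale weighting. We expect the main obstacle here to be reconciling that weighting with the printed form, since the $\epsilon$ factors appear in a non-obvious place; we would present the trace condition in its natural form and note the equivalence.

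\textbf{Fourth point (face $g=0$).} Here the transversal eigenvalue is $+\epsilon_2[\alpha(n_{target}-n^*)+\beta_2(x_{target}-x^*)]$. The face block has the same structure as for the second point, with $\beta_1$ removed. The determinant is $2\epsilon_1(1+\theta)x^*(1-x^*)n^*(1-n^*)D$, and the trace is $x^*(1-x^*)[-\kappa x^*(\Delta_{TR}-\Delta_{PS})/D+\kappa]=\kappa x^*(1-x^*)\Delta_{PS}/D$. The trace simplifies this way because $x^*\Delta_{TR}+(1-x^*)\Delta_{PS}-x^*(\Delta_{TR}-\Delta_{PS})=\Delta_{PS}$. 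Hence:
\begin{itemize}
\item if $D>0$, the trace is positive (as $\Delta_{PS}=P-S>0$ and $\kappa>0$), so there is an eigenvalue with positive real part;
\item if $D<0$, the determinant is negative, so the point is a saddle.
\end{itemize}
In either case the point is unstable, which proves the final claim.
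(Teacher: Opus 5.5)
Your approach is the same as the paper's. On each face you split off the transversal eigenvalue from the row of $J$ that collapses to a single diagonal entry, and you settle the remaining two eigenvalues by the trace and determinant of the in-face $2\times 2$ block; the paper's appendix writes out exactly these Jacobians. Your eigenvalue, trace and determinant computations are all correct. For $E(\frac{1}{1+\theta},0,\cdot)$ you are slightly more careful than the paper: you make $\kappa>0$ explicit, and you cover $D<0$ via the determinant. That case is in fact vacuous under $T>R>P>S$, where $D>0$ automatically.

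Where your plan fails is the final step of matching your conditions to the printed ones. Three of the printed conditions are not what the Jacobian gives.

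(i) Substituting $n^*$ into $(1-2n^*)\Delta_{PS}+\beta_1/\eta$ gives $\bigl(1-2n_{target}-2\frac{\beta_2x_{target}-\delta/\eta}{\alpha}\bigr)\Delta_{PS}+\frac{\beta_1}{\eta}<0$. The statement has $+2$ where this has $-2$; the appendix has the correct sign. So your claim that substitution ``yields the stated condition'' is false as written.

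(ii) Likewise, since $n_{target}-n^*$ carries a minus sign, the first line of (\ref{eq7}) should read $\alpha\bigl(n_{target}-\frac12-\frac{\beta_1+\kappa x^*}{2D}\bigr)+\beta_2(x_{target}-x^*)>\delta$. You hedged here, and rightly so.

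(iii) Most importantly, there is no equivalence to note for the third line of (\ref{eq8}). Write $a=x_1^*(1-x_1^*)(\Delta_{PS}-\Delta_{TR}+\kappa)$ and $b=\delta g_1^*(1-g_1^*)>0$. Your trace condition is $a<\epsilon_2 b$, whereas the printed line says $a<(\epsilon_1/\epsilon_2)b$.
\begin{itemize}
\item The quantities $x_1^*$, $g_1^*$, $a$, $b$ do not depend on $\epsilon_1$ or $\epsilon_2$ at all.
\item The true trace does not involve $\epsilon_1$, because the $n$-row decouples at $n=1$.
\item Hence, whenever $a>0$ (which is compatible with existence and a positive determinant when $\beta_1\beta_2$ is large), varying $\epsilon_1$ alone flips the printed inequality without changing stability.
\end{itemize}
The two conditions coincide only if $\epsilon_1=\epsilon_2^2$. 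The paper's appendix computes exactly your trace and then states the printed inequality without justification.

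So carry out your derivation as planned, but state and prove the corrected conditions and flag these three discrepancies, rather than asserting agreement with the printed theorem.
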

\begin{proof}
	For $E(0, \frac{1}{\eta}, n_{target}+\frac{\beta_2 x_{target}-\delta/ \eta}{\alpha})$,
	let $x=0, \dot{g}=0, \dot{n}=0$. we can obtain $g_2^*=1/\eta $. Substituting $\dot{g}=0 $ yields the expression for $n_2^*$.
	Find the partial derivative at $x=0$, where the single dimensional eigenvalue $\lambda_1= (1-2n_2^*) \Delta_{PS}+\beta_1 g_2^*$. For the submatrix of the $(g, n)$ space:
	$\frac{\partial \dot{g}}{\partial g} = -\epsilon_2\delta g_2^*(1-g_2^*) < 0$, $\frac{\partial \dot{n}}{\partial n} = 0$, Therefore, $\text{Tr} (N_{sub})<0 $ always holds.
	Also, since the product of interleaved partial derivatives is negative , namely, $\frac{\partial \dot{g}} {\partial n}<0$ and $\frac{\partial \dot{n}} {\partial g}>0$, $\text{Det} (J_{sub})>0$ always holds. Therefore, the stability of the equilibrium point on this surface depends entirely on $\lambda_1<0 $, that is, $(1 - 2n_{target}+2\frac{\beta_2 x_{target}-\delta/ \eta}{\alpha})\Delta_{PS} + \frac{\beta_1}{\eta} < 0$.
	
	For $E(\frac{1-\eta}{1+\theta}, 1, 1/2+\frac{\beta_1 +\kappa \frac{1-\eta}{1+\theta}}{2[\frac{1-\eta}{1+\theta} \Delta_{TR}+(1-\frac{1-\eta}{1+\theta})\Delta_{PS}]})$,
	let $g=1, \dot {x}=0, \dot {n}=0$, we can obtain $x_3^*=\frac{1-\eta}{1+\theta}$. Substituting the core driver term $\dot{x}=0 $ yields $n_3^*= 1/2+\frac{\beta_1 +\kappa \frac{1-\eta}{1+\theta}}{2[\frac{1-\eta}{1+\theta} \Delta_{TR}+(1-\frac{1-\eta}{1+\theta})\Delta_{PS}]}$.
	Boundary eigenvalue $\lambda_2 = -\epsilon_2[\alpha(n_{target} - n_3^*) + \beta_2(x_{target} - x_3^*) - \delta]$. Let $\lambda_2<0$ to obtain the first equation of conditional expression (\ref{eq7}).
	For the $(x, n)$ submatrix, the requirement is $\text{Det}(J_{sub}) = - (\frac{\partial \dot{x}}{\partial n})(\frac{\partial \dot{n}}{\partial x}) > 0$. Since $\frac {\partial \dot{n}} {\partial x}>0 $, there must be $\frac{\partial \dot{x}} {\partial n}<0 $, simplifying $x_3^* \Delta_{TR}+(1-x_3^*) \Delta_{PS}>0 $. Let $\text{Tr} (N_{sub})=\frac{\partial \dot {x}} {\partial x}<0 $, and we will obtain the third equation of conditional expression (\ref{eq7}).
	
	For $E(x_1^*, g_1^*, 1)$, let $n=1, \dot{x}=0, \dot{g}=0$, we can obtain $x_1^*$ and $g_1^*$.
	The boundary eigenvalue $\lambda_3=-\epsilon_1[\theta x_1^* - (1-x_1^*)+\eta g_1^*]$, where $\lambda_3<0$, provides the first stability condition of equation (\ref{eq8}). For the submatrix of the internal $(x, g)$ space, its determinant is calculated as:
	\begin{equation*}
		\text{Det}(J_{sub}) = \epsilon_2 x_1^*(1-x_1^*)g_1^*(1-g_1^*) \left[ \beta_1 \beta_2 - \delta(\kappa - \Delta_{TR} + \Delta_{PS}) \right].
	\end{equation*}
	Due to $x_1^*, g_1^* \in (0,1) $, requiring $\text{Det} (J{sub})>0$ is equivalent to the denominator of the characteristic analytical solution being strictly less than 0. Combined with trace $\text{Tr} (J{sub})<0 $, it is integrated into the comprehensive stability judgment condition in equation (\ref{eq8}). \\
	For $E(\frac{1}{1+\theta}, 0, \frac{1}{2}+\frac{\kappa \frac{1}{1+\theta}}{2[\frac{1}{1+\theta} \Delta_{TR}+(1-\frac{1}{1+\theta})\Delta_{PS}]})$, in the $g=0$ subspace, we know $\frac{\partial \dot{n}}{\partial n} \equiv 0$, making the reduced Jacobian's trace completely determined by $\frac{\partial \dot{x}}{\partial x}$. Substituting the algebraic equilibrium condition ($\dot{x}=0$) simplifies this trace exactly to $\text{Tr}(J_{sub}) = \frac{\theta}{1+\theta}(\frac{\kappa \frac{1}{1+\theta}}{[\frac{1}{1+\theta} \Delta_{TR}+(1-\frac{1}{1+\theta})\Delta_{PS}]})\Delta_{PS}>0$, it mathematically guarantees $\text{Tr}(J_{sub}) > 0$. Therefore, it is unstable.
	Theorem \ref{theo3.3} has been proven.
\end{proof}	

Three representative numerical examples are presented in Figure \ref{fig3} to validate the theoretical analysis mentioned above. Figures \ref{fig3}(a) and (b) present a scenario where $E(0, \frac{1}{\eta}, n_{target}+\frac{\beta_2 x_{target}-\delta/ \eta}{\alpha})$ exists and is stable when the model parameters satisfy the conditions of Theorem \ref{theo3.3}. We find that all trajectories in the phase space converge to this stable equilibrium point, which means that maintaining a constant incentive intensity can ensure sustainable resources, but everyone chooses free riding behavior. Figures \ref{fig3}(c) and (d) present a scene where $E(\frac{1-\eta}{1+\theta}, 1, 1/2+\frac{\beta_1 +\kappa \frac{1-\eta}{1+\theta}}{2[\frac{1-\eta}{1+\theta} \Delta_{TR}+(1-\frac{1-\eta}{1+\theta})\Delta_{PS}]})$ exists and is stable. This means that maximizing the incentive intensity can ensure a constant level of individual choice for cooperative behavior and sustainable resources. Figures \ref{fig3}(e) and (f) present a scene where $E(x_1^*, g_1^*, 1)$ exists and is stable. At this point, a constant incentive intensity ensures a stable level of cooperation and resources are always abundant.

\begin{figure}[h]
	\centering 
	\includegraphics[width=0.5\textwidth]{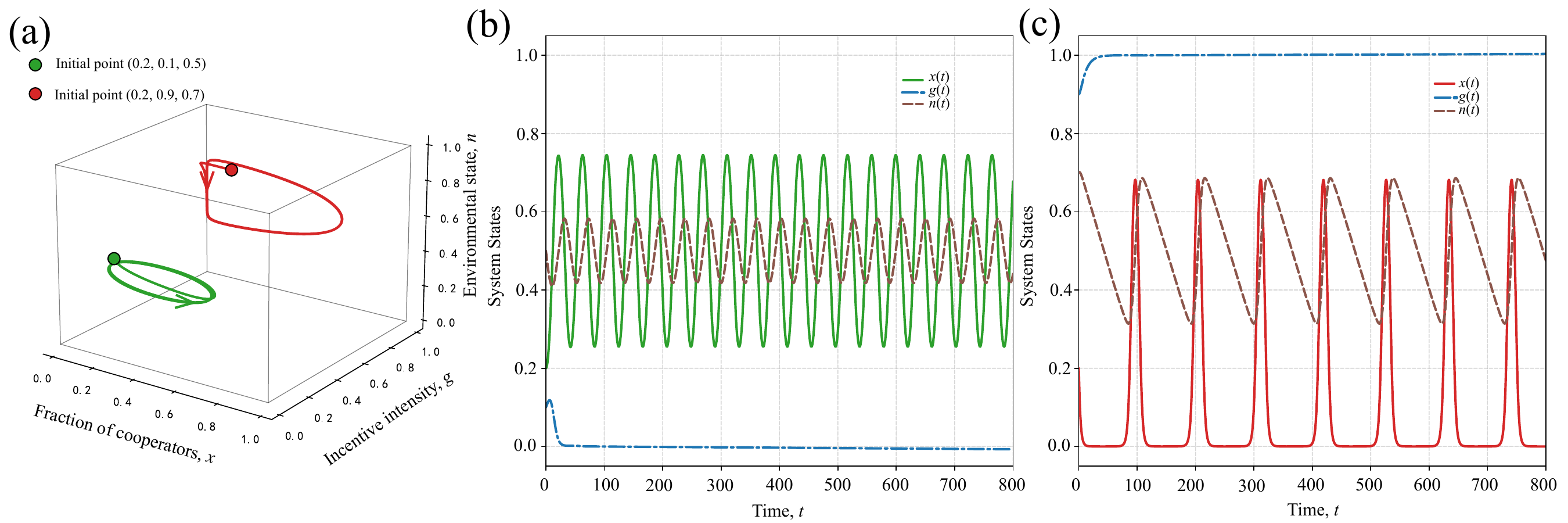}  
	\caption{\textbf{Coexistence of dual boundary limit cycles in the tripartite co-evolutionary system.} Panel (a) show 3D phase space illustrating two distinct periodic closed orbits. Trajectories converge to different boundary limit cycles based on initial conditions. Panel (b) show time-series of the green trajectory starting at (0.2, 0.1, 0.5). Panel (c) show time-series of the red trajectory starting at (0.2, 0.9, 0.7).  Parameters are $\Delta_{TR} = 1$, $\Delta_{PS} = 1$, $\beta_1 = 0$, $\beta_2 = 1$, $\alpha=0$, $\eta = 0.8$, $x_{target}=0.3$, $\delta=0.1$,  $\kappa = 0$, $\epsilon_1=0.1$, $\epsilon_2=0.5$, $n_{target}=0.5$, $\theta=1$.
	}
	\label{fig4}  
\end{figure} 
\begin{theorem}\label{theo3.4}
	On the boundary plane ($g \in \{0,1 \} $) of the coupling system, if $\kappa=0$ and $\beta_1=0$, the two-dimensional subsystem degenerates into a conservative system. At this point, the evolution trajectory of the system presents a closed periodic orbit. 
\end{theorem}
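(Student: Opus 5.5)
The plan is to restrict to each invariant face, exhibit an explicit first integral there, and show that its level sets are closed curves around a centre. First, the faces $g=0$ and $g=1$ are invariant, because $\dot g$ carries the factor $g(1-g)$. Fix $\bar g\in\{0,1\}$ and set $\kappa=\beta_1=0$. The face dynamics become
\begin{equation*}
\dot x = x(1-x)(1-2n)f(x),\qquad \dot n=\epsilon_1 n(1-n)h(x),
\end{equation*}
where $f(x)=x\Delta_{TR}+(1-x)\Delta_{PS}>0$ by the Prisoner's Dilemma ordering $T>R>P>S$, and $h(x)=(1+\theta)x-1+\eta\bar g$.

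The interior equilibrium of this planar system is $(x^*,n^*)=\bigl(\tfrac{1-\eta\bar g}{1+\theta},\tfrac12\bigr)$. It lies in the open square if and only if $0<1-\eta\bar g<1+\theta$. For $\bar g=0$ this always holds; for $\bar g=1$ it requires $\eta<1$, which is satisfied by the parameters of Figure~\ref{fig4}. I will make this existence condition explicit, since otherwise there is no interior orbit to close up.

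To make "conservative" precise, multiply the vector field by the Dulac factor $B(x,n)=\bigl[x(1-x)f(x)\,n(1-n)\bigr]^{-1}$, which is positive in the open square. The rescaled field is
\begin{equation*}
\bigl((1-2n)/(n(1-n)),\ \epsilon_1 h(x)/(x(1-x)f(x))\bigr),
\end{equation*}
whose first component depends only on $n$ and second only on $x$. It is therefore divergence-free and Hamiltonian, with
\begin{equation*}
H(x,n)=\epsilon_1\int^x \frac{h(s)}{s(1-s)f(s)}\,ds-\ln\bigl(n(1-n)\bigr).
\end{equation*}
One checks directly that $\dot H=H_x\dot x+H_n\dot n=0$, so $H$ is a first integral of the face flow. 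Moreover $\partial_x H$ has the sign of $h(x)$ and $\partial_n H$ has the sign of $-(1-2n)$. Hence $H$ has exactly one critical point in $(0,1)^2$, namely $(x^*,\tfrac12)$, and it is a strict global minimum because $H$ is separable and each one-dimensional part is unimodal.

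Next I will show that $H$ is proper on the open square. As $x\to0$, the integrand behaves like $(\eta\bar g-1)/(\Delta_{PS}s)$ with $\eta\bar g-1<0$, so the $x$-part tends to $+\infty$. As $x\to1$, it behaves like $(\theta+\eta\bar g)/(\Delta_{TR}(1-s))$ with positive numerator, so again the $x$-part tends to $+\infty$. Finally $-\ln(n(1-n))\to+\infty$ as $n\to0$ or $n\to1$. So every sublevel set $\{H\le c\}$ is compact in the open square.

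For $c>H(x^*,\tfrac12)$, the level set $\{H=c\}$ is therefore a compact regular curve, since $\nabla H\neq0$ away from the centre. Because $H$ is separable with unimodal parts, for each $n$ the equation $H=c$ has at most two solutions in $x$, so this level set is a single closed curve encircling the centre. It contains no equilibria and is invariant, so the flow traverses it as a periodic orbit. This gives the claimed family of closed orbits on each face, as displayed in Figure~\ref{fig4}.

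The main obstacle I expect is the properness and connectedness argument. It needs the signs $h(0)<0<h(1)$ (i.e.\ the existence condition on $x^*$) and the positivity of $\Delta_{PS}$ and $\Delta_{TR}$. I will state these hypotheses explicitly rather than leave them implicit in the theorem.
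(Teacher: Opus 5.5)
Your proposal is correct and follows the same route as the paper: a first integral obtained by separating variables. Your Dulac rescaling is the same computation, and your $H$ is $\epsilon_1$ times the paper's explicit partial-fraction $H$, up to an additive constant.

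The paper stops at $\dot H\equiv 0$. Your properness and unique-minimum argument, together with the explicit hypothesis $\eta<1$ on the face $g=1$, fill in steps the paper's proof omits rather than departing from it. That hypothesis is genuinely needed: without it $\dot n>0$ throughout that face, so no closed orbit exists.
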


\begin{proof}	
	Under the condition of $g \in \{0,1 \} $ and $\beta_1=\kappa=0$, the evolutionary equation of the system can be expressed as:
	\begin{equation}
		\begin{cases} 
			\dot{x} &= x(1-x)(1-2n) \left[ \Delta_{PS} + x(\Delta_{TR} - \Delta_{PS}) \right]\\
			\dot{n} &= \epsilon_1 n(1-n) \left[ (\theta+1)x - 1 + \eta g \right].
		\end{cases}
	\end{equation}
	By separating variables, we have
	$$\frac{(\theta+1)x - 1 + \eta g}{x(1-x) \left[ \Delta_{PS} + (\Delta_{TR} - \Delta_{PS})x \right]} dx = \frac{1-2n}{\epsilon_1 n(1-n)} dn.$$
	
	Partial fractional expansion and integration can be performed on both ends of the above equation to construct a generalized motion integral $H(x,n)$
	\begin{equation*}
	\begin{aligned}
			H(x,n) &= -\frac{1}{\epsilon_1} \ln \left( n(1-n) \right) - \frac{1-\eta g}{\Delta_{PS}}\ln x - \frac{\theta + \eta g}{\Delta_{TR}}\ln(1-x) \\
			&+ \left(\frac{\theta + \eta g}{\Delta_{TR}} + \frac{1-\eta g}{\Delta_{PS}}\right)\ln\left[\Delta_{PS} + (\Delta_{TR} - \Delta_{PS})x\right].
				\end{aligned}
	\end{equation*}

	The full derivative of the energy function along the system evolution trajectory is always zero, that is, $\dot {H} \equiv 0$. Therefore, On the boundary plane $g \in \{0,1 \} $ of the coupling system, the evolution trajectory of the system presents a closed periodic orbit. Specifically, when $g=0$ and $\kappa=0$, our coupled system degenerates into the coupled system described in previous work \cite{weitz2016oscillating}.
\end{proof}	

\begin{figure}[t]
	\centering 
	\includegraphics[width=0.5\textwidth]{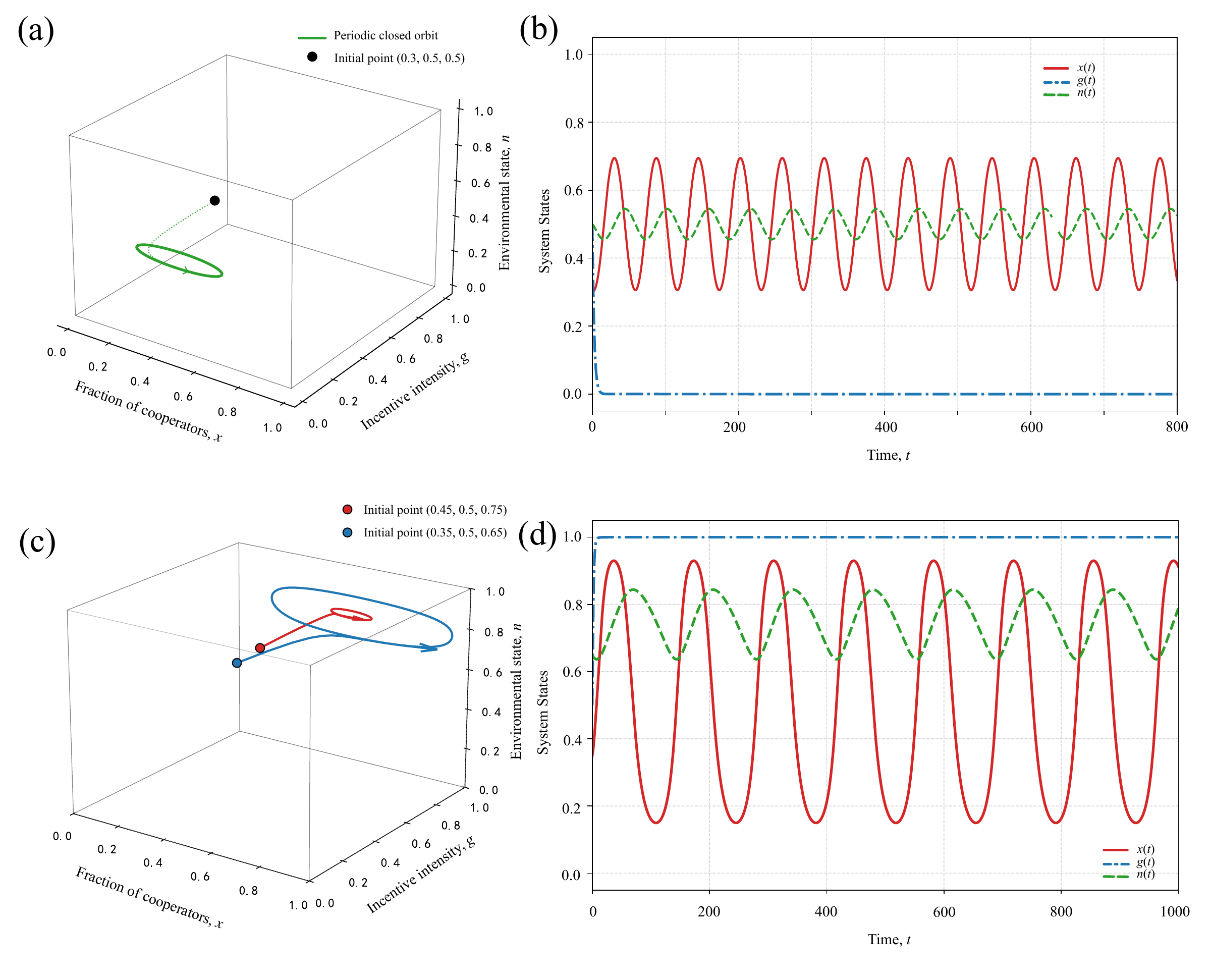}  
	\caption{\textbf{Emergence of periodic closed orbits on the boundary faces.} 3D phase portrait and corresponding frequency-time plot illustrating evolutionary trajectories converging to periodic closed orbits on the face $g=0$ in panels (a) and (b), on the face $g=1$ for panels (c) and (d). Parameters are $\Delta_{TR} = 1$, $\Delta_{PS} = 1$, $\beta_1 = 0$, $\beta_2 = 1$, $\alpha=1$, $\eta = 0.5$, $x_{target}=0$, $\delta=1$,  $\kappa = 0$, $\epsilon_1=0.05$, $\epsilon_2=0.5$, $n_{target}=0$, $\theta=1$ in panels (a) and (b); for (c-d) $\Delta_{TR} = 0.8$, $\Delta_{PS} = 0.2$, $\beta_1 = 0.1$, $\beta_2 = 2$, $\alpha=2$, $\eta = 0.4$, $x_{target}=0.9$, $\delta=0.2$,  $\kappa = 0.3$, $\epsilon_1=0.05$, $\epsilon_2=0.5$, $n_{target}=0.9$, $\theta=0.2$.
	}
	\label{fig5}  
\end{figure} 
Next, we provide numerical examples to verify the theoretical analysis results mentioned above. As shown in Figure \ref{fig4}(a), when $\kappa=0$ and $\beta_1=0$, we find periodic closed orbits on the planes of $g=0$ and $g=1$ in 3D phase space. This means that the frequency of cooperators and resource status are in a cyclic oscillation. Figures \ref{fig4}(b) and (c) respectively present the changes in system states over time, verifying the results in the phase diagram. 

In Figure \ref{fig5}, we present the numerical results of periodic closed orbits appearing on the $g=0$ and $g=1$ planes, respectively. On the $g=0$ plane, when $\kappa=0$, we can prove that the coupled system becomes a Hamiltonian system. Therefore, a closed orbit can appear (Figures \ref{fig5}(a) and (b)). On the $g=1$ plane, when $\beta_1\neq 0$, the parameters must satisfy $\frac{\kappa}{\beta_1}=\frac{\Delta_{TR}-\Delta_{PS}}{\Delta_{PS}}$, and the system will become a non dissipative conservative system. In Figures \ref{fig5}(c) and (d), we present relevant numerical validations
	
\begin{figure}[h]
	\centering 
	\includegraphics[width=0.5\textwidth]{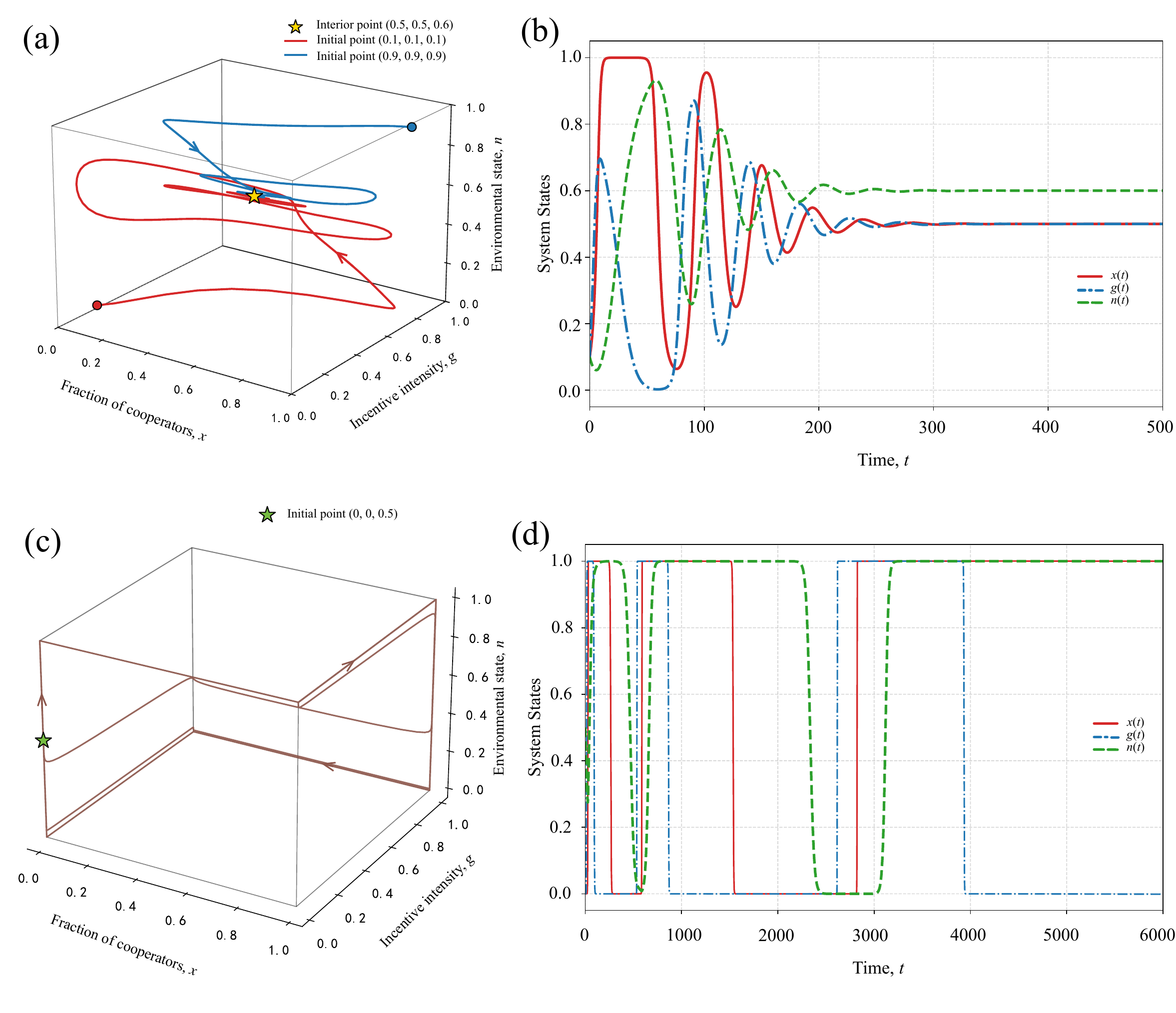}  
	\caption{\textbf{Global dynamics of the interior equilibrium and heteroclinic cycle under distinct parameter configurations.} Panels (a) and (b) show the 3D phase portrait and corresponding frequency-time plot of the interior equilibrium $(0.5, 0.5, 0.6)$, and limit cycle in panels (c) and (d). Parameters are $\Delta_{TR} = 0.8$, $\Delta_{PS} = 0.2$, $\beta_1 = 0.1$, $\beta_2 = 1$, $\alpha=1$, $\eta = 0.6$, $x_{target}=0.7$, $\delta=0.8$,  $\kappa = 0.1$, $\epsilon_1=0.2$, $\epsilon_2=0.5$, $n_{target}=0.8$, $\theta=0.4$ in panels (a) and (b); for (c-d) $\Delta_{TR} = 0.8$, $\Delta_{PS} = 0.4$, $\beta_1 = 0.6$, $\beta_2 = 1$, $\alpha=3$, $\eta = 0.5$, $x_{target}=0.2$, $\delta=0.2$,  $\kappa = 0.5$, $\epsilon_1=0.05$, $\epsilon_2=0.5$, $n_{target}=0.9$, $\theta=0.8$.
	}
	\label{fig6}  
\end{figure} 
\begin{theorem}\label{theo3.5}
For three-dimensional coupled systems, there is an internal equilibrium point $(x^*,g^*,n^*)$ if and only if the nonlinear equations
$$\begin{cases} (1-2n^*)[\Delta_{PS} + x^*(\Delta_{TR} - \Delta_{PS})] + \beta_1 g^* + \kappa x^* = 0 \\ \alpha(n_{target} - n^*) + \beta_2(x_{target} - x^*) - \delta g^* = 0 \\ (\theta + 1)x^* - 1 + \eta g^* = 0 \end{cases}$$
has at least one set of real number solutions in the open set $(0,1)^3 $. The equilibrium point is stable if the characteristic polynomial $\lambda^3 + a_1 \lambda^2 + a_2 \lambda + a_3 = 0$ of the Jacobian matrix strictly satisfies the Routh Hurwitz criterion \cite{golnaraghi2010automatic}: $$a_1>0, \quad a_3>0, \quad a_1 a_2-a_3>0,$$ where $a_1=-\text{Tr}(J)$, $a_2= (J_{11}J_{22} - J_{12}J_{21}) - J_{13}J_{31} - J_{23}J_{32}$, and $a_3 = -\det(J)$. $J_{ij}$ represents the elements of the Jacobian matrix.
\end{theorem}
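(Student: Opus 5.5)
The argument has two parts: characterize interior equilibria, then linearize at such a point and apply the Routh--Hurwitz criterion for cubics.

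\textbf{Existence.} On the open cube $(0,1)^3$ the prefactors $x(1-x)$, $\epsilon_2 g(1-g)$ and $\epsilon_1 n(1-n)$ are strictly positive. Hence $(\dot x,\dot g,\dot n)=0$ at an interior point exactly when the three bracketed factors vanish. The $\dot x$ bracket is rewritten using $x\Delta_{TR}+(1-x)\Delta_{PS}=\Delta_{PS}+x(\Delta_{TR}-\Delta_{PS})$. The $\dot n$ bracket is $\theta x-(1-x)+\eta g=(\theta+1)x-1+\eta g$. This gives precisely the stated nonlinear system, and both directions of the ``if and only if'' follow at once.

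\textbf{Simplifying the Jacobian.} Differentiate the vector field and evaluate at $(x^*,g^*,n^*)$. Every derivative that falls on a prefactor multiplies a bracket that is zero at the equilibrium, so each entry equals the prefactor times the partial derivative of the corresponding bracket. Explicitly:
\begin{itemize}
\item $J_{11}=x^*(1-x^*)[(1-2n^*)(\Delta_{TR}-\Delta_{PS})+\kappa]$, $J_{12}=x^*(1-x^*)\beta_1$, $J_{13}=-2x^*(1-x^*)[\Delta_{PS}+x^*(\Delta_{TR}-\Delta_{PS})]$.
\item $J_{21}=-\epsilon_2\beta_2 g^*(1-g^*)$, $J_{22}=-\epsilon_2\delta g^*(1-g^*)$, $J_{23}=-\epsilon_2\alpha g^*(1-g^*)$.
\item $J_{31}=\epsilon_1(\theta+1)n^*(1-n^*)$, $J_{32}=\epsilon_1\eta n^*(1-n^*)$, $J_{33}=0$.
\end{itemize}
The vanishing of $J_{33}$ matters later, because it simplifies the characteristic polynomial.

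\textbf{Characteristic polynomial.} Expand $\det(\lambda I-J)=\lambda^3+a_1\lambda^2+a_2\lambda+a_3$. In general:
\begin{itemize}
\item $a_1=-\mathrm{Tr}(J)$;
\item $a_2$ is the sum of the principal $2\times2$ minors, i.e.\ $(J_{11}J_{22}-J_{12}J_{21})+(J_{11}J_{33}-J_{13}J_{31})+(J_{22}J_{33}-J_{23}J_{32})$;
\item $a_3=-\det(J)$.
\end{itemize}
Setting $J_{33}=0$ collapses $a_2$ to $(J_{11}J_{22}-J_{12}J_{21})-J_{13}J_{31}-J_{23}J_{32}$, which is exactly the expression in the statement.

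\textbf{Stability.} The Routh--Hurwitz criterion for a real monic cubic states that all roots have negative real part iff $a_1>0$, $a_3>0$ and $a_1a_2-a_3>0$; these together force $a_2>0$. Under these conditions the equilibrium is hyperbolic with all eigenvalues in the left half-plane. The linearization theorem (Hartman--Grobman, or Lyapunov's indirect method) then gives local asymptotic stability.

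\textbf{Main obstacle.} The delicate step is making sure the coefficient formula for $a_2$ is correct. This relies on $J_{33}=0$, which needs the observation that $\partial\dot n/\partial n$ contains only the factor $(1-2n)\cdot[\text{bracket}]$, and that bracket is zero at the equilibrium. Beyond that, the theorem only claims sufficiency, so no converse has to be established. I would not attempt to prove that the nonlinear system has a solution; existence is stated as a condition, not asserted.
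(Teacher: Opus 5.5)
Your proposal is correct and follows essentially the same route as the paper: you divide out the positive prefactors on $(0,1)^3$ to get the three bracket equations, linearize at the interior point, expand $\det(\lambda I-J)$, and apply Routh--Hurwitz. Your explicit check that $J_{33}=0$ (because $F_n$ does not depend on $n$ and vanishes at equilibrium) justifies the stated $a_2$ formula, and your appeal to Lyapunov's indirect method justifies the stability conclusion. The paper's brief proof and its appendix Jacobian use both of these facts only implicitly.
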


\begin{proof}
	Here we provide a brief proof. By setting the right-hand side of three differential equations equal to 0, we can directly obtain the system of equations in Theorem \ref{theo3.5}. To determine the local stability of the internal equilibrium point of a nonlinear system, we calculate the Jacobian matrix $J$ of the system at that point. The characteristic equation is $\vert{}\lambda I - J_{14}\vert{} = 0$, which is expanded to obtain a third-order polynomial: $$\lambda^3 + a_1 \lambda^2 + a_2 \lambda + a_3 = 0,$$ where the coefficients are $a_1 = -(J_{11} + J_{22})$,$a_2 = J_{11}J_{22} - J_{12}J_{21} - J_{13}J_{31} - J_{23}J_{32}$,$a_3 = J_{11}J_{23}J_{32} - J_{12}J_{23}J_{31} - J_{13}(J_{21}J_{32} - J_{22}J_{31}).$ According to the Routh Hurwitz criterion, if the real part of all eigenvalues is negative, namely, $\text{Re} (\lambda_i)<0, i \in \{1,2,3 \} $, the conclusion of Theorem \ref{theo3.5} can be obtained.
\end{proof}

For the convenience of analysis, we will rewrite the coupled system as
$\dot{x} = x(1-x)F_x(x,g,n),$
$\dot{g} = g(1-g)F_g(x,g,n),$
$\dot{n} = n(1-n)F_n(x,g,n).$

\begin{theorem}\label{theo3.6}
Due to the intrinsic positive payoff $\Delta_{PS} > 0$, the boundary heteroclinic cycle sequentially connecting the six nodes $E_1(0,0,0)$, $E_3(0,1,0)$, $E_5(1,1,0)$, $E_8(1,1,1)$, $E_6(1,0,1)$, $E_4(0,0,1)$ is topologically broken at $E_1(0,0,0)$ with a positive transverse eigenvalue $\lambda_x(E_1) = \Delta_{PS} > 0$. Then the system exhibits an asymptotically stable limit cycle in the $O(\epsilon)$ neighborhood of the boundary, provided the characteristic compression ratio satisfies
$\rho = \prod_{i=1}^{6} \frac{\vert{}\lambda_{i}^{(c)}\vert{}}{\lambda_{i}^{(e)}}> 1,$
where the local expansive ($\lambda_i^{(e)}$) and contractive ($\lambda_i^{(c)}$) eigenvalues are defined as:
$$\begin{cases} \lambda_{1}^{(e)} = F_g(0,0,0) > 0, & \lambda_{1}^{(c)} = F_n(0,0,0) < 0 \\ \lambda_{2}^{(e)} = F_x(0,1,0) > 0, & \lambda_{2}^{(c)} = -F_g(0,1,0) < 0 \\ \lambda_{3}^{(e)} = F_n(1,1,0) > 0, & \lambda_{3}^{(c)} = -F_x(1,1,0) < 0 \\ \lambda_{4}^{(e)} = -F_g(1,1,1) > 0, & \lambda_{4}^{(c)} = -F_n(1,1,1) < 0 \\ \lambda_{5}^{(e)} = -F_x(1,0,1) > 0, & \lambda_{5}^{(c)} = F_g(1,0,1) < 0 \\ \lambda_{6}^{(e)} = -F_n(0,0,1) > 0, & \lambda_{6}^{(c)} = F_x(0,0,1) < 0. \end{cases}$$
\end{theorem}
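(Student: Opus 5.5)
We will use the standard return-map construction for a heteroclinic cycle. Each of the six corners $E_1,E_3,E_5,E_8,E_6,E_4$ is hyperbolic. At each corner the Jacobian is diagonal, with entries $\pm F_\bullet$ (signs as in the statement) scaled by the time-scale factors. We first check the sign conditions listed in Theorem~\ref{theo3.6}. At each node, the outgoing edge direction carries the expansive eigenvalue $\lambda_i^{(e)}>0$, and the incoming direction carries the contractive eigenvalue $\lambda_i^{(c)}<0$. The edges of the cube joining consecutive nodes are invariant, because each is the intersection of two invariant faces $\{x\in\{0,1\}\}$, $\{g\in\{0,1\}\}$, $\{n\in\{0,1\}\}$. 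Along each edge the one-dimensional flow is monotone once the relevant $F$ has a fixed sign, so we obtain heteroclinic connections $E_1\to E_3\to E_5\to E_8\to E_6\to E_4\to E_1$.

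Next we treat the transverse eigenvalues. At $E_1$ the $x$-direction eigenvalue is $F_x(0,0,0)=\Delta_{PS}>0$. This is a transverse unstable direction that does not belong to the cycle, so the cycle cannot be an attractor. Trajectories entering a neighbourhood of $E_1$ are pushed into the interior by a displacement of order $\epsilon$ per passage; this is the sense in which the cycle is "topologically broken". We therefore look for a periodic orbit that shadows the cycle within an $O(\epsilon)$ tube.

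We will build a Poincaré section $\Sigma$ transverse to the edge $E_4\to E_1$ near $E_1$, and compose local Dulac maps near each node with regular transition maps along the edges. Near node $i$ we use the linearization in coordinates $(u,v)$, where $u$ is the incoming (contracting) distance and $v$ is the outgoing one. The time to go from $v=h$ to $v=\text{const}$ is $\frac{1}{\lambda_i^{(e)}}\ln(\cdot/h)$. This gives the local map $u\mapsto u\,h^{|\lambda_i^{(c)}|/\lambda_i^{(e)}}$ up to constants. The global edge maps are diffeomorphisms that, to leading order, carry the transverse coordinates across. Composing all six, the distance $h$ to the cycle in the cycle-directed component maps as $h\mapsto C h^{\rho}$ with $\rho=\prod_i |\lambda_i^{(c)}|/\lambda_i^{(e)}$. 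We add to this the $O(\epsilon)$ injection from the transverse instability at $E_1$, giving $P(h)=Ch^{\rho}+c\,\epsilon+o(\epsilon)$.

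When $\rho>1$, $P$ is a contraction near $h=0$: $P'(h)=C\rho h^{\rho-1}$ is small for small $h$. Hence $P$ maps an interval $[0,K\epsilon]$ into itself, and the Banach fixed point theorem gives a unique attracting fixed point $h^*=O(\epsilon)$. This fixed point is an asymptotically stable limit cycle in an $O(\epsilon)$ neighbourhood of the boundary.

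The main obstacle is making the composition rigorous in three dimensions. Each node has a third, non-cycle eigenvalue, so the Dulac maps act on two transverse coordinates, not one. We would need to verify that these additional eigenvalues (for instance the $n$- or $x$-directions at the intermediate nodes) are contracting, or at least do not dominate, so that the return map reduces to the one-dimensional form above. We would also need $C^1$-linearizability, via Sternberg nonresonance conditions or Deng's lemma, to control the error terms. We would handle this by assuming nonresonance generically and estimating the remaining transverse components as $h^{\gamma}$ with $\gamma>0$, absorbing them into the $o(\epsilon)$ term.
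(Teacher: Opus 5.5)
You correctly check that the six connections lie on invariant edges. Building an explicit Poincar\'e map is also a more concrete route than the paper's proof, which is only a heuristic sketch: a growth-rate comparison $g\propto x^{K}$ near $E_1$, then reading $\rho>1$ as a global contraction. But the step that produces the limit cycle has no basis. The additive term $c\,\epsilon$ in $P(h)=Ch^{\rho}+c\,\epsilon+o(\epsilon)$ cannot occur. Because of the factors $x(1-x)$, $g(1-g)$, $n(1-n)$, every face of the cube is invariant for all parameter values. So no connection is actually broken, the cycle itself is invariant, and any return map must satisfy $P(0)=0$. In section coordinates measuring distances to faces, each coordinate is multiplied, never shifted. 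The rates $\epsilon_1,\epsilon_2$ only rescale eigenvalues, so they enter the exponents of the Dulac maps. In fact $\rho$ does not depend on them at all, since each coordinate direction occurs twice as contracting and twice as expanding. Likewise the positive eigenvalue $\Delta_{PS}$ at $E_1$ acts multiplicatively on $x$. Without the spurious term, and even granting your one-dimensional reduction, $P(h)\approx Ch^{\rho}$ with $\rho>1$ has $h=0$ as its only attracting fixed point; the other root of $Ch^{\rho}=h$ has $P'=\rho>1$. So your contraction argument would show that nearby orbits converge to the heteroclinic cycle itself, with return times growing without bound. An attracting heteroclinic cycle excludes, rather than produces, a periodic orbit in that neighbourhood. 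An isolated limit cycle would need an extra mechanism, for example a perturbation destroying the invariance of the faces, or a bifurcation as $\rho$ crosses $1$, and you have none.

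The second gap is the one you flag yourself, and your proposed fix fails exactly at the node the theorem is about. Linearizing at $E_1$ and passing from $\{n=h\}$ to $\{g=h\}$ gives $x_{\mathrm{out}}=x_{\mathrm{in}}(h/g_{\mathrm{in}})^{\Delta_{PS}/\lambda_1^{(e)}}$. The transverse coordinate is multiplied by a negative power of the small quantity $g_{\mathrm{in}}$, so it is not an $O(h^{\gamma})$, $\gamma>0$, correction that can be absorbed. Since $x$ is contracted at rate $\Delta_{PS}$ near $E_4$ and expanded at the same rate near $E_1$, an orbit proceeds to $E_3$ only if it spends less time near $E_1$ than near $E_4$; otherwise it escapes along the $x$-edge toward $E_2(1,0,0)$. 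To leading order, with $a=1-n$ and $b=g$ on entry to $E_4$, this condition reads $\lambda_6^{(e)}\ln(h/b)<\bigl(\lambda_1^{(e)}+F_g(0,0,1)\bigr)\ln(h/a)$, which involves two independent small coordinates. Hence the return map is genuinely two-dimensional. The correct bookkeeping uses logarithmic coordinates, where each passage is to leading order affine with a $2\times2$ transition matrix. Stability is then decided by the leading eigenvalue \emph{and eigenvector} of the product of the six matrices (a Field--Swift or Krupa--Melbourne type analysis), and the transverse eigenvalues enter there. The scalar $\rho$ cannot control this. Take $\alpha(2n_{target}-1)+2\beta_2x_{target}<0$, which is compatible with every sign condition in the statement, while $\rho$ can be made arbitrarily large by letting $\kappa\uparrow\Delta_{TR}$. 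Then the inequality above fails for every nearby orbit, so all of them leave the six-node cycle at $E_1$. The paper's sketch passes over this by asserting that $g$ outgrows $x$ near $E_1$. But that depends on how small $g$ is on arrival, which is exactly what the two-dimensional map tracks.
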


\begin{proof}
We also provide a brief proof here. At $E_1(0,0,0)$, the transverse eigenvalue along the $x$-axis is $\lambda_x = \Delta_{PS} > 0$. Then the local orbital derivative near $E_1$ satisfies: $\frac{dg}{dx} \approx \frac{\epsilon_2 \lambda_g}{\lambda_x} \frac{g}{x} = K \frac{g}{x},$
where the rate ratio $K \gg 1$. Integrating yields the geometric envelope $g \propto x^K$. This means any small leak in $x$ makes $g$ increase extremely fast, instantly pulling the trajectory back to the boundary surface.
 
The overall stability is determined by the compression ratio $\rho$. Because $\rho = \prod_{i=1}^6 \frac{\vert{}\lambda_i^{(c)}\vert{}}{\lambda_i^{(e)}} > 1$, the total inward attraction is stronger than the local expansion at the nodes. This global contraction ensures that the trajectory does not diverge, but instead forms a stable, closed limit cycle very close to the boundary.
\end{proof}

In Figure \ref{fig6}, we provide numerical verification of the two theorems mentioned above. As shown in the top row of Figure \ref{fig6}, when the model parameters satisfy the conditions of Theorem \ref{theo3.5}, there exists a stable internal equilibrium point in the phase space, which means that using constant excitation can maintain a constant level of cooperation while ensuring resource sustainability. When the model parameters satisfy the conditions of Theorem 6, we find that the internal equilibrium point becomes unstable, and the system trajectory forms an oscillation dynamic along six nodes $E (0,0,0) \rightarrow E (0,1,0) \rightarrow  E (1,1,0) \rightarrow  E(1,1,1) \rightarrow  E(1,0,1) \rightarrow  E(0,0,1) \rightarrow  E(0,0,0)$ (see Figures \ref{fig6}(c)and (d)). This means that the system will still fall into the oscillating tragedy of the commons.

    \begin{figure}[h]
	\centering 
	\includegraphics[width=0.5\textwidth]{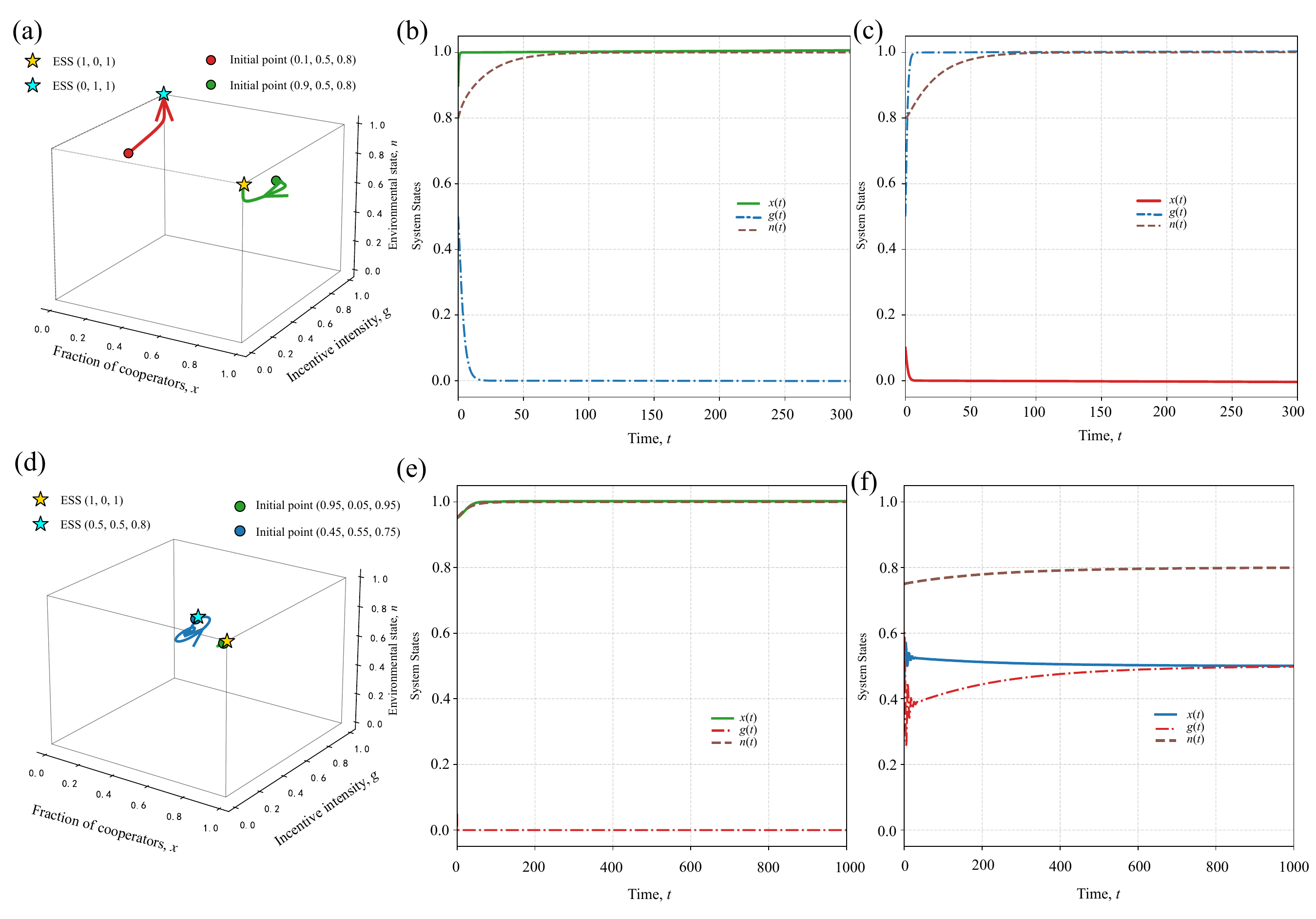}  
	\caption{\textbf{Bistability and time-series evolution in the tripartite co-evolutionary system.} Panels (a)-(c) show bistability between the autonomous equilibrium (1, 0, 1) and the regulation-dependent equilibrium (0, 1, 1). Panels (d)-(f) show coexistence of the corner equilibrium (1, 0, 1) and an interior equilibrium (0.5, 0.5, 0.8). Parameters are $\Delta_{TR} = 0.8$, $\Delta_{PS} = 1.5$, $\beta_1 = 0.1$, $\beta_2 = 3$, $\alpha=1$, $\eta = 2$, $x_{target}=0.8$, $\delta=0.5$,  $\kappa = 3$, $\epsilon_1=0.05$, $\epsilon_2=0.5$, $n_{target}=0.8$, $\theta=1$ in panels (a) - (c); $\Delta_{TR} = 0.1$, $\Delta_{PS} = 2$, $\beta_1 = 1.06$, $\beta_2 = 25$, $\alpha=1$, $\eta = 0.2$, $x_{target}=0.6$, $\delta=5$,  $\kappa = 0.2$, $\epsilon_1=0.05$, $\epsilon_2=0.5$, $n_{target}=0.8$, $\theta=0.8$ in panels (d) - (f);
	}
	\label{fig7}  
\end{figure}

Finally, our coupled system also exhibits some bistable results, such as $E (1,0,1)$ and $E(0,1,1)$; $E(1,0,1)$ and $E(x^*, g^*, n^*) $; $E(1,0,1)$ and $E(0, g^*, 1)$. Here we mainly focus on the first two bistable results. When the following inequality holds
$$\begin{cases} \kappa > \Delta_{TR} \\ \Delta_{PS} > \beta_1 \\ \eta > 1 \\  \alpha(n_{target} - 1) + \beta_2 x_{target} - \delta > 0 \end{cases}.$$
We know that $E (1,0,1)$ and $E(0,1,1)$ are both stable. Numerical experiments are presented in Figures \ref{fig7} (a)-(c)).
When $\kappa > \Delta_{TR}$, $0 < x^*, g^*, n^* < 1$, and $a_1 > 0,  a_3 > 0,a_1 a_2 - a_3 > 0$, $E(1,0,1)$ and $E(x^*, g^*, n^*)$ are both stable. Numerical experiments are presented in Figures \ref{fig7} (d)-(f)).

\section{Conclusion}
In this work, we have constructed a three-dimensional coupled dynamic system that integrates social strategies, policy interventions, and ecological environment evolution. Through theoretical analysis and numerical verification, we have revealed that the coupled system can exhibit rich dynamic phenomena such as monostability, bistability, oscillation, etc. Importantly, $E (1,0,1)$ can become the only global stable equilibrium point in phase space, which means that the system can endogenously guide multi-agent strategies towards complete cooperation ($x=1$) without relying on long-term external intervention, and achieve long-term stability of the ecological environment ($n=1 $) under zero normalized regulatory costs ($g=0$). In addition, there may be some bistable results in the system, which depict that the direction of system evolution strictly depends on the attraction domain where the initial state is located.

The interior equilibrium represents a practically attainable operating regime rather than an idealized social optimum. At this equilibrium, cooperation is sustained at an intermediate level, the resource stock remains viable, and institutional intervention persists at a positive intensity. This result indicates that sustainable governance does not necessarily require complete behavioral consensus. Instead, stability may arise from a balance among private incentives, ecological regeneration, and regulatory compensation, which suggests that institutional intervention should be
viewed as an endogenous component of system operation rather than as a temporary external correction.

The desirable state characterized by full cooperation, maximal resource availability, and minimal intervention has an important institutional interpretation. Minimal regulation should not be understood as the absence of governance. Rather, it is an endogenous consequence of a
well-functioning system in which cooperative behavior and ecological
recovery reinforce each other. Once cooperation becomes sufficiently
widespread and the resource stock approaches its target level, the
regulatory burden can be reduced without destabilizing the system.
However, premature withdrawal of institutional support may move the
system outside the attraction region of the desirable state and trigger
a return to low cooperation and resource degradation.

\appendix[Detailed theoretical analysis]

%


		In the main text, our 3D system is
		\begin{equation*}
			\label{eq:dynamical_system}
			\begin{cases} 
				\dot{x} =  x(1-x) \{(1 - 2n) \left[ x\Delta_{TR} + (1-x)\Delta_{PS} \right] + \beta_1 g + \kappa x\}, \\[8pt]
				\dot{g} = \epsilon_2 g(1-g)  \left[ \alpha(n_{target} - n) + \beta_2(x_{target} - x) - \delta g \right], \\[8pt]
				\dot{n} =\epsilon_1 n(1-n) \left[ \theta x-(1-x) + \eta g \right].
			\end{cases}
		\end{equation*}
	For the sake of convenience, we have rewritten it as
$$\begin{cases} \dot{x} = x(1-x)F_x(x, g, n) \\ \dot{g} = g(1-g)F_g(x, g, n) \\ \dot{n} = n(1-n)F_n(x, g, n). \end{cases}$$
The Jacobian matrix $J$ obtained by taking the partial derivative of the system at any point $(x, g, n)$ is as follows:
$$J =  \begin{bmatrix} J_{11} & J_{12} & J_{13} \\ J_{21} & J_{22} & J_{23} \\ J_{31} & J_{32} & J_{33} \end{bmatrix},$$ where
$J_{11} = (1 - 2x)F_x + x(1 - x)\frac{\partial F_x}{\partial x}$,$J_{12} = x(1 - x)\frac{\partial F_x}{\partial g}$,$J_{13} = x(1 - x)\frac{\partial F_x}{\partial n}$,$J_{21} = g(1 - g)\frac{\partial F_g}{\partial x}$,$J_{22} = (1 - 2g)F_g + g(1 - g)\frac{\partial F_g}{\partial g}$,$J_{23} = g(1 - g)\frac{\partial F_g}{\partial n}$,$J_{31} = n(1 - n)\frac{\partial F_n}{\partial x}$,$J_{32} = n(1 - n)\frac{\partial F_n}{\partial g}$,$J_{33} = (1 - 2n)F_n + n(1 - n)\frac{\partial F_n}{\partial n}$.

The Jacobian matrix at the equilibrium points of the eight corners of the system can be written as
$$J(0,0,0) \begin{bmatrix} \Delta_{PS} & 0 & 0 \\ 0 & \epsilon_2(\alpha n_{target} + \beta_2 x_{target}) & 0 \\ 0 & 0 & -\epsilon_1 \end{bmatrix}.$$ Then it is unstable since $\Delta_{PS}>0$.
$$J(1,0,0) =  \begin{bmatrix} -(\Delta_{TR} + \kappa) & 0 & 0 \\ 0 & J_{22} & 0 \\ 0 & 0 & \epsilon_1 \theta \end{bmatrix},$$ where $J_{22}=\epsilon_2[\alpha n_{target} + \beta_2 (x_{target} - 1)].$ Then it is unstable since $\epsilon_1 \theta>0$.
$$J(0,1,0) =  \begin{bmatrix} \Delta_{PS} + \beta_1 & 0 & 0 \\ 0 & J_{22}& 0 \\ 0 & 0 & \epsilon_1(\eta - 1) \end{bmatrix},$$ where $J_{22}=-\epsilon_2(\alpha n_{target} + \beta_2 x_{target} - \delta).$ Then it is unstable since $ \Delta_{PS} + \beta_1>0$.
$$J(0,0,1) =  \begin{bmatrix} -\Delta_{PS} & 0 & 0 \\ 0 & \epsilon_2[\alpha (n_{target} - 1) + \beta_2 x_{target}] & 0 \\ 0 & 0 & \epsilon_1 \end{bmatrix}.$$ Then it is unstable since $\epsilon_1>0$.
$$J(1,1,0) =  \begin{bmatrix} -(\Delta_{TR} + \beta_1 + \kappa) & 0 & 0 \\ 0 & J_{22} & 0 \\ 0 & 0 & \epsilon_1(\theta + \eta) \end{bmatrix},$$ where $J_{22}=-\epsilon_2[\alpha n_{target} + \beta_2 (x_{target} - 1) - \delta].$ Then it is unstable since $\epsilon_1(\theta + \eta)>0$.
$$J(1,0,1) =  \begin{bmatrix} \Delta_{TR} - \kappa & 0 & 0 \\ 0 & \epsilon_2[\alpha (n_{target} - 1) + \beta_2 (x_{target} - 1)] & 0 \\ 0 & 0 & -\epsilon_1 \theta \end{bmatrix},$$ Then it is stable when $\Delta_{TR} - \kappa<0.$

$$J(0,1,1) =  \begin{bmatrix} \beta_1 - \Delta_{PS} & 0 & 0 \\ 0 & J_{22} & 0 \\ 0 & 0 & -\epsilon_1(\eta - 1) \end{bmatrix},$$ where $J_{22}=-\epsilon_2[\alpha (n_{target} - 1) + \beta_2 x_{target} - \delta]$. It is stable when $\beta_1 - \Delta_{PS}<0$, $J_{22}<0$, and $\eta - 1>0$.
$$J(1,1,1) =  \begin{bmatrix} \Delta_{TR} - \beta_1 - \kappa & 0 & 0 \\ 0 & J_{22} & 0 \\ 0 & 0 & -\epsilon_1(\theta + \eta) \end{bmatrix},$$ where $J_{22}=-\epsilon_2[\alpha (n_{target} - 1) + \beta_2 (x_{target} - 1) - \delta]$. It is unstable since $J_{22}>0$.

Therefore, we can easily obtain the conclusion of Theorem 1 in main text.

For $E(0, \frac{\alpha (n_{target}-1)+\beta_2 x_{target}}{\delta}, 1)$,

$$J=  \begin{bmatrix} \beta_1 g^* - \Delta_{PS} & 0 & 0 \\ -\epsilon_2 \beta_2 g^*(1 - g^*) & -\epsilon_2 \delta g^*(1 - g^*) & -\epsilon_2 \alpha g^*(1 - g^*) \\ 0 & 0 & \epsilon_1 (1 - \eta g^*) \end{bmatrix},$$		
where $g^{*}=\frac{\alpha (n_{target}-1)+\beta_2 x_{target}}{\delta},$ it is stable when $\beta_1 g ^ *<\Delta_{PS}$ and $\eta g^*>1$.

For $E(\frac{\Delta_{PS}-\beta_1}{\Delta_{PS}-\Delta_{TR}+\kappa}, 1, 1)$,
$$J =  \begin{bmatrix} J_{11} & x^*(1-x^*)\beta_1 & -2x^*(1-x^*)[x^*\Delta_{TR} + (1-x^*)\Delta_{PS}] \\ 0 & J_{22}& 0 \\ 0 & 0 & -\epsilon_1[\theta x^* - (1-x^*) + \eta] \end{bmatrix},$$
where $J_{11}=x^*(1-x^*)(\Delta_{PS} - \Delta_{TR} + \kappa), J_{22}=-\epsilon_2[\alpha(n_{target} - 1) + \beta_2(x_{target} - x^*) - \delta]$, $x^{*}=\frac{\Delta_{PS}-\beta_1}{\Delta_{PS}-\Delta_{TR}+\kappa}$, it is stable when $\Delta_{PS} - \Delta_{TR} + \kappa < 0,  \delta < \alpha(n_{target} - 1) + \beta_2(x_{target} - \frac{\Delta_{PS}-\beta_1}{\Delta_{PS}-\Delta_{TR}+\kappa})$ and  $(\theta+1)\frac{\Delta_{PS}-\beta_1}{\Delta_{PS}-\Delta_{TR}+\kappa}+\eta>1$.

Therefore, we can easily obtain the conclusion of Theorem 2 in main text.

For $E(0, \frac{\alpha n_{target}+\beta_2 x_{target}}{\delta}, 0)$,
$$J =  \begin{bmatrix} \Delta_{PS} + \beta_1 g^* & 0 & 0 \\ -\epsilon_2 \beta_2 g^*(1 - g^*) & -\epsilon_2 \delta g^*(1 - g^*) & -\epsilon_2 \alpha g^*(1 - g^*) \\ 0 & 0 & \epsilon_1 (\eta g^* - 1) \end{bmatrix}.$$ It is unstable since $\Delta_{PS} + \beta_1 g^*>0$.

For $E(1, \frac{\alpha n_{target}+\beta_2 (x_{target}-1)}{\delta}, 0)$,
$$J =  \begin{bmatrix} -(\Delta_{TR} + \beta_1 g^* + \kappa) & 0 & 0 \\ -\epsilon_2 \beta_2 g^*(1 - g^*) & -\epsilon_2 \delta g^*(1 - g^*) & -\epsilon_2 \alpha g^*(1 - g^*) \\ 0 & 0 & \epsilon_1 (\theta + \eta g^*) \end{bmatrix}.$$ It is unstable since $\epsilon_1 (\theta + \eta g^*)>0$.

For $E(\frac{\Delta_{PS}}{\Delta_{PS}-\Delta_{TR}+\kappa}, 0,1)$,
$$J =  \begin{bmatrix} J_{11} & x^*(1-x^*)\beta_1 & J_{13} \\ 0 & J_{22} & 0 \\ 0 & 0 & -\epsilon_1[\theta x^* - (1-x^*)] \end{bmatrix},$$ where $J_{11}=x^*(1-x^*)(\Delta_{PS} - \Delta_{TR} + \kappa),$ $J_{22}=\epsilon_2[\alpha(n_{target} - 1) + \beta_2(x_{target} - x^*)]$, $J_{13}=-2x^*(1-x^*)[x^*\Delta_{TR} + (1-x^*)\Delta_{PS}]$. It is unstable since $\Delta_{PS}-\Delta_{TR}+\kappa>0.$

For $E(0, \frac{1}{\eta}, n_{target}+\frac{\beta_2 x_{target}-\delta/ \eta}{\alpha})$, 
$$J =  \begin{bmatrix} (1-2n^*)\Delta_{PS} + \frac{\beta_1}{\eta} & 0 & 0 \\ -\epsilon_2 \beta_2 g^*(1-g^*) & -\epsilon_2 \delta g^*(1-g^*) & -\epsilon_2 \alpha g^*(1-g^*) \\ \epsilon_1 (\theta + 1) n^*(1-n^*) & \epsilon_1 \eta n^*(1-n^*) & 0 \end{bmatrix},$$
we have $\lambda_1 = (1-2n^*)\Delta_{PS} + \frac{\beta_1}{\eta}$. For the $x=0 $plane, we have
$$J_{sub} =  \begin{bmatrix} -\epsilon_2 \delta g^*(1-g^*) & -\epsilon_2 \alpha g^*(1-g^*) \\ \epsilon_1 \eta n^*(1-n^*) & 0 \end{bmatrix}.$$
We have $Tr(J_{sub}) = -\epsilon_2 \delta g^*(1-g^*)<0$ and $Det(J_{sub}) = \epsilon_1 \epsilon_2 \alpha \eta g^*(1-g^*) n^*(1-n^*)>0$. This means that $\lambda_2$ and $\lambda_3$ are a pair of eigenvalues with a strictly negative real part. Therefore, the stable condition for this point is
$$\left[1 - 2\left(n_{target} + \frac{\beta_2 x_{target} - \delta/\eta}{\alpha}\right)\right]\Delta_{PS} + \frac{\beta_1}{\eta} < 0.$$

For $E(\frac{1-\eta}{1+\theta}, 1, 1/2+\frac{\beta_1 +\kappa \frac{1-\eta}{1+\theta}}{2[\frac{1-\eta}{1+\theta} \Delta_{TR}+(1-\frac{1-\eta}{1+\theta})\Delta_{PS}]})$,

$$J =  \begin{bmatrix} J_{11} & x^*(1-x^*)\beta_1 & J_{13} \\ 0 & J_{22} & 0 \\ \epsilon_1 n^*(1-n^*)(\theta + 1) & \epsilon_1 \eta n^*(1-n^*) & 0 \end{bmatrix},$$
where $J_{11}=x^*(1-x^*)[(1-2n^*)(\Delta_{TR}-\Delta_{PS}) + \kappa]$, $J_{22}=-\epsilon_2[\alpha(n_{target} - n^*) + \beta_2(x_{target} - x^*) - \delta],$ and $J_{13}=-2x^*(1-x^*)[x^*\Delta_{TR} + (1-x^*)\Delta_{PS}].$ We have 
$\lambda_2 = -\epsilon_2[\alpha(n_{target} - n^*) + \beta_2(x_{target} - x^*) - \delta].$
For the $g=1$ plane, we have
$$J_{sub} =  \begin{bmatrix} J_{11} & J_{12} \\ \epsilon_1 n^*(1-n^*)(\theta + 1) & 0 \end{bmatrix},$$ where $J_{11}=x^*(1-x^*)[(1-2n^*)(\Delta_{TR}-\Delta_{PS}) + \kappa].$ and $J_{12}=-2x^*(1-x^*)[x^*\Delta_{TR} + (1-x^*)\Delta_{PS}]$. Then
$Det(J_{sub}) = 2\epsilon_1 x^*(1-x^*)n^*(1-n^*)(\theta + 1)[x^*\Delta_{TR} + (1-x^*)\Delta_{PS}]>0$ and $Tr(J_{sub}) = x^*(1-x^*)[(1-2n^*)(\Delta_{TR}-\Delta_{PS}) + \kappa]$.
Accordingly, the stability of this equilibrium point requires the following conditions to be met
\begin{equation*}
	\begin{cases}
		\alpha(n_{target} -1/2+\frac{\beta_1 +\kappa \frac{1-\eta}{1+\theta}}{2[\frac{1-\eta}{1+\theta} \Delta_{TR}+(1-\frac{1-\eta}{1+\theta})\Delta_{PS}]}) \\+ \beta_2(x_{target} - \frac{1-\eta}{1+\theta}) > \delta \\
		\frac{1-\eta}{1+\theta}\Delta_{TR} + (1-\frac{1-\eta}{1+\theta})\Delta_{PS} > 0 \\
		-\frac{\beta_1 +\kappa \frac{1-\eta}{1+\theta}}{\frac{1-\eta}{1+\theta} \Delta_{TR}+(1-\frac{1-\eta}{1+\theta})\Delta_{PS}}(\Delta_{TR} - \Delta_{PS}) + \kappa < 0.
	\end{cases}
\end{equation*}

For $E(x_1^*, g_1^*, 1)$, we have
$$J =  \begin{bmatrix} J_{11} & x_1^*(1-x_1^*)\beta_1 & J_{13} \\ -\epsilon_2 \beta_2 g_1^*(1-g_1^*) & -\epsilon_2 \delta g_1^*(1-g_1^*) & -\epsilon_2 \alpha g_1^*(1-g_1^*) \\ 0 & 0 & J_{33}\end{bmatrix},$$
where $J_{11}=x_1^*(1-x_1^*)(\Delta_{PS} - \Delta_{TR} + \kappa)$, $J_{13}=-2x_1^*(1-x_1^*)[x_1^*\Delta_{TR} + (1-x_1^*)\Delta_{PS}]$, $J_{33}=-\epsilon_1[\theta x_1^* - (1-x_1^*) + \eta g_1^*].$ We have $\lambda_3 = -\epsilon_1[\theta x_1^* - (1-x_1^*) + \eta g_1^*].$
For the $n=1$ plane, we have
$$J_{sub} =  \begin{bmatrix} x_1^*(1-x_1^*)(\Delta_{PS} - \Delta_{TR} + \kappa) & x_1^*(1-x_1^*)\beta_1 \\ -\epsilon_2 \beta_2 g_1^*(1-g_1^*) & -\epsilon_2 \delta g_1^*(1-g_1^*) \end{bmatrix}.$$ Then we have $Det(J_{sub}) = \epsilon_2 x_1^*(1-x_1^*)g_1^*(1-g_1^*) [\beta_1 \beta_2 - \delta(\Delta_{PS} - \Delta_{TR} + \kappa)]$ and $Tr(J_{sub}) = x_1^*(1-x_1^*)(\Delta_{PS} - \Delta_{TR} + \kappa) - \epsilon_2 \delta g_1^*(1-g_1^*)$. The stability of this equilibrium point requires the following conditions to be met
\begin{equation*}
	\begin{cases}
		\theta x_1^* - (1-x_1^*) + \eta g_1^* > 0 \\
		\delta(\kappa - \Delta_{TR} + \Delta_{PS}) - \beta_1 \beta_2 < 0 \\
		\frac{1}{\epsilon_1} x_1^*(1-x_1^*)(\Delta_{PS} - \Delta_{TR} + \kappa) - \frac{\delta}{\epsilon_2} g_1^*(1-g_1^*) < 0.
	\end{cases}
\end{equation*}

For $E(\frac{1}{1+\theta}, 0, \frac{1}{2}+\frac{\kappa \frac{1}{1+\theta}}{2[\frac{1}{1+\theta} \Delta_{TR}+(1-\frac{1}{1+\theta})\Delta_{PS}]})$,
$$J =  \begin{bmatrix} J_{11} & x^*(1-x^*)\beta_1 & J_{13}\\ 0 & J_{22} & 0 \\ \epsilon_1 n^*(1-n^*)(\theta + 1) & \epsilon_1 \eta n^*(1-n^*) & 0 \end{bmatrix},$$
where $J_{11}=x^*(1-x^*)[(1-2n^*)(\Delta_{TR}-\Delta_{PS}) + \kappa]$, $J_{13}=-2x^*(1-x^*)[x^*\Delta_{TR} + (1-x^*)\Delta_{PS}] $, $J_{22}=\epsilon_2[\alpha(n_{target} - n^*) + \beta_2(x_{target} - x^*)].$ We have $\lambda_2 = \epsilon_2[\alpha(n_{target} - n^*) + \beta_2(x_{target} - x^*)]$.
For the $g=0$ plane, we have
$$J_{sub} =  \begin{bmatrix} J_{11} & J_{12} \\ \epsilon_1 n^*(1-n^*)(\theta + 1) & 0 \end{bmatrix},$$
where $J_{11}=x^*(1-x^*)[(1-2n^*)(\Delta_{TR}-\Delta_{PS}) + \kappa]$ and $J_{12}=-2x^*(1-x^*)[x^*\Delta_{TR} + (1-x^*)\Delta_{PS}].$ Then we have $Det(J_{sub}) = 2\epsilon_1 x^*(1-x^*)n^*(1-n^*)(\theta + 1)[x^*\Delta_{TR} + (1-x^*)\Delta_{PS}]$ and $Tr(J_{sub}) = x^*(1-x^*)[(1-2n^*)(\Delta_{TR}-\Delta_{PS}) + \kappa] = x^*(1-x^*) \frac{\kappa \Delta_{PS}}{x^*\Delta_{TR} + (1-x^*)\Delta_{PS}}>0$. Accordingly, it is unstable.

Therefore, we can easily obtain the conclusion of Theorem 3 in main text.

For $E(x^*,g^*,n^*)$, 
$$J^* =  \begin{bmatrix} J_{11} & x^*(1-x^*)\beta_1 & J_{13} \\ -\epsilon_2 g^*(1-g^*)\beta_2 & -\epsilon_2 g^*(1-g^*)\delta & -\epsilon_2 g^*(1-g^*)\alpha \\ \epsilon_1 n^*(1-n^*)(\theta + 1) & \epsilon_1 n^*(1-n^*)\eta & 0 \end{bmatrix},$$
where $J_{11}=x^*(1-x^*)[(1-2n^*)(\Delta_{TR} - \Delta_{PS}) + \kappa]$ and $J_{13}=-2x^*(1-x^*)[x^*\Delta_{TR} + (1-x^*)\Delta_{PS}].$ Then $Tr(J^*) =  x^*(1-x^*)[(1-2n^*)(\Delta_{TR} - \Delta_{PS}) + \kappa] - \epsilon_2 g^*(1-g^*)\delta$ and $Det(J^*) = J_{13}(J_{21}J_{32} - J_{22}J_{31}) - J_{23}(J_{11}J_{32} - J_{12}J_{31})$. According to Routh Hurwitz, we can obtain the stability conditions for the internal equilibrium point.

\bibliographystyle{IEEEtran}
\bibliography{Bibliography}

\end{document}